\begin{document}

\title*{Solving the initial value problem for cellular automata by pattern decomposition}
\author{Henryk Fukś\inst{1}\orcidID{0000-0001-5855-3706}}
\institute{ Department of Mathematics and Statistics\\
         Brock University,  St. Catharines, Canada\
\email{hfuks@brocku.ca}\\
\url{https://lie.ac.brocku.ca/$\sim$hfuks/} }
%
%
\maketitle

\abstract{For many cellular automata, it is possible to express the state of a given cell
after $n$ iterations as an explicit function of the initial configuration. We say that for such rules the solution of the  initial value problem can be obtained.
In some cases, one can construct the solution formula for the  initial value problem by analyzing the spatiotemporal pattern generated by the rule and decomposing it into
simpler segments which one can then  describe algebraically. We show an
example of a rule  when such approach is successful, namely elementary rule 156. 
Solution of the initial value problem for this rule is constructed and then
used to compute the density of ones after $n$ iterations, starting from
a random initial condition. We also show how to obtain probabilities of occurrence of
longer blocks of symbols.}

\section{Introduction}
For cellular automata (CA), similarly as for partial differential equations \cite{hadamard1923},
one can consider the Cauchy problem or the initial value problem (IVP) \cite{paper59}. 
Consider an elementary rule with the local function $f:\{0,1\}^3 \to \{0,1\}$.
Let $x=\{0,1\}^\mathbbm{Z}$ be the initial configuration, and define
the corresponding global function $F: \{0,1\}^\mathbbm{Z} \to \{0,1\}^\mathbbm{Z}$
by
$
[F(x)]_i=f(x_{i-1}, x_i, x_{i+1}).
$
The value of $[F^n(x)]_i$ is then usually called ``the state of the cell $i$ after $n$ iterations''. In what follows, we will use the convention that the uppercase letters $F,G,H,\ldots$ correspond to the global
functions of CA with local functions, respectively, $f,g,h\ldots$.

The initial value problem for CA is the problem of finding  an explicit expression
for the  state of cell $i$ after $n$ iterations as a function of the initial
configuration $x$. In other words, it is the problem of expressing
$[F^n(x)]_i$ in terms of components of the initial configuration $x$.
For elementary rules $[F^n(x)]_i$ can only depend on $x_{i-n},x_{i-n+1},\ldots, x_{i+n}$,
thus we are seeking to express $[F^n(x)]_i$ explicitly in terms of
$x_{i-n},x_{i-n+1},\ldots, x_{i+n}$.

Knowing the explicit formula for $[F^n(x)]_i$ is useful if we want to know what a given rule ``computes''. For example, elementary rule
 with Wolfram number 128 computes the product
of cell values \cite{hfbook}, and the IVP for this rule has the solution
$$
[F^n_{128}(x)]_j =\prod_{i=-n}^{n} x_{i+j}.
$$
Note that we indicated the Wolfram number of the rule by putting it in the index of
$F$, this is the convention we will be using in the remainder of the paper.
Although there exists a rule computing the product, not all  simple 
computations are possible with cellular automata. For example, we know  that there is no binary CA which would
compute the majority of cell values \cite{land95}, thus there is no CA with global function $F$ 
which would yield
$$
[F^n(x)]_j=\mathrm{majority}(x_{j-n}, x_{j-n+1}, \ldots, x_{j+n}).
$$
Therefore,  while products  can be easily computed,  majority cannot. Which functions
can then be computed by CA and which not? This is a very difficult question
for which the answer is not known, but by solving the IVP for other 
elementary rules we will at least be able to find out what kind of functions these
rules are capable of computing. In \cite{hfbook}, the IVP problem is solved for over 60 elementary CA. Some solutions presented there are very simple, 
like for rule 34,
\begin{equation*}
[F_{34}^n(x)]_j = -x_{{j+n}}x_{{j+n-1}}+x_{{j+n}}.
\end{equation*}
In other cases the solution formulae are moderately complicated, like for rule 172, originally analyzed in \cite{paper59},
\begin{multline*}
[F_{172}^n(x)]_j =  \overline{x}_{{j-2}} \overline{x}_{{j-1}}  x_{{j}}+
 \left(  \overline{x}_{{j+n-2}}  x_{{j+n-1}}+x_{{j+n-2}}x_{{j+n}}
 \right) \prod _{i=j-2}^{j+n-3}(1- \overline{x}_{{i}}  \overline{x}
_{{i+1}} ),
\end{multline*}
where $\overline{x}=1-x$.
There are also rules with solutions much more 
complex than the above, with multiple terms involving summations and products.

In this paper we will demonstrate  how to obtain the solution formula for one of such rules, using a technique of pattern decomposition.
We will use elementary rule 156 as an example. This example has been selected because even though the
rule is relatively simple in terms of its dynamics, the resulting
expression for $F$, as we will shortly see, is quite long, probably approaching
the limit of complexity which can be handled by this method.
The expression for $[F_{156}^n(x)]_j$ was given in~\cite{hfbook}, but many details of its derivation, and, most importantly, proof of its correctness was not included there.
\section{Motivation}
Cellular automata are often viewed as fully discrete analogs of partial differential
equations. While a wealth of methods for solving both ordinary and partial differential
equations exist, it is usually not possible to use any of such methods for CA due to the very
different structure of the underlying space. Nevertheless, some general ideas used in the theory of differential equations can be useful for cellular automata.
One of such ideas is the method for solving non-homogeneous linear equations such as
\begin{equation}\label{ode}
\dot{ \mathbf{x}}(t)=\mathbf{A x}(t) + \mathbf{b}(t),
\end{equation}
where $\mathbf{x}(t)$ is the unknown vector function, $\mathbf{A}$ is an is $n \times  n$ matrix and $\mathbf{b}(t)$ is a continuous vector valued function. One can treat the right hand side of the above
equation as a sum of ``unperturbed'' term to which a ``perturbation'' is added,
$$\dot{ \mathbf{x}}(t)=\underbrace{\mathbf{A x}(t)}_{\mathrm{unperturbed}} + \underbrace{ \mathbf{b}(t).}_{\mathrm{perturbation}}$$
 The solution of the ``unperturbed'' equation,
 $$\dot{ \mathbf{x}}(t)=\mathbf{A x}(t),$$
 is easy to find,
 $$\mathrm{x}(t)=e^{\mathbf{A}t} \mathbf{x}_0,$$
and then  the solution of the complete eq. (\ref{ode})
can be expressed  \cite{perko1996differential} as
 $$
 \mathrm{x}(t)=e^{\mathbf{A}t} \mathbf{x}_0
 +
 e^{\mathbf{A}t}
  \int_0^t e^{\mathbf{A}\tau} \mathbf{b}(\tau) d \tau.
 $$
 This solution can be interpreted as a sum of the solution of the ``unperturbed'' equation,
 $e^{\mathbf{A}t} \mathbf{x}_0$,
 and the effect of the perturbation, $e^{\mathbf{A}t}
  \int_0^t e^{\mathbf{A}\tau} \mathbf{b}(\tau) d \tau$.
 
Let us then suppose that the local function $f$ of an elementary CA can be decomposed into
 two parts,
 $$f(x_1,x_2,x_3)= g(x_1,x_2,x_3)+b(x_1,x_2,x_3),$$
 where $g(x_1,x_2,x_3)$ is the local function of some other CA with known
 formula for $[G^n(x)]_j$. The function $b(x_1,x_2,x_3)$ plays the role of a ``perturbation'' here. In analogy to the above differential equation example,
 we expect that the solution formula for $[F^n(x)]_j$ will be given
 by 
 $$[F^n(x)]_j=[G^n(x)]_j + P(x,n,j),$$
 where $P(x,n,j)$ is the effect of the perturbation $b(x_1,x_2,x_3)$.
 Obviously this is where the analogy stops, as we do not know
 any general method for finding $P(x,n,j)$ for a given $b(x_1,x_2,x_3)$.
 Nevertheless, in many cases the form of $P(x,n,j)$ can be obtained
 heuristically by analyzing the spatiotemporal diagram produced by the CA rule $f$
 and then verifying rigorously that such heuristic ``guess'' is correct.
 
 The example we are going to use in this paper are rules 156 and 140,
 for which $b(x_1,x_2,x_3)$ has a very simple form, being nonzero only in one case, namely
 $$f_{156}(x_1,x_2,x_3)=f_{140}(x_1,x_2,x_3)+b(x_1,x_2,x_3),$$
 $$b(x_1,x_2,x_3)=\begin{cases}
  1  &  \text{ if $(x_1,x_2,x_3)=(1,0,0)$,} \\
  0 & \text{ otherwise.}
\end{cases}$$
Furthermore, the formula for $[F_{140}^n(x)]_j$ can be constructed by induction,
as demonstrated in the next section.
\section{Rule 140}
The local function of rule 140 can be expressed as
$$f_{140}(x_1,x_2,x_3) = \begin{cases}
  1  &  \text{ if $(x_1,x_2,x_3)=(0,1,0), (0,1,1)$ or  $(1,1,1)$, } \\
  0 & \text{ otherwise.}
\end{cases}$$
If $x_1,x_2,x_3$ are Boolean variables, that is, taking values in $\{0,1\}$, then
it is easy to see that $f_{140}(x_1,x_2,x_3)$ returns 1 only when
one of the products 
$\overline{x}_1 x_2 \overline{x}_3$,
$\overline{x}_1 x_2 {x}_3$ or
$x_1 x_2 x_3$ is equal to 1, where where $\overline{x}=1-x$. We can, therefore, write
\begin{equation}\label{r140def}
f_{140}(x_1,x_2,x_3)=\overline{x}_1 x_2 \overline{x}_3+\overline{x}_1 x_2 {x}_3+
x_1 x_2 x_3 =\overline{x}_1 x_2+x_1 x_2 x_3.
\end{equation}
The last equality reflects the fact that $\overline{x}_3+x_3=1$.

Representation of the local function such as in eq. (\ref{r140def}) is called 
\emph{density polynomial representation}. A more formal definition of density polynomials can be found in \cite{hfbook}.
For now it suffices to say that once we have the density polynomial representation
of the local function $f$ of a cellular automaton,
we can obtain values of the cell $j$ in consecutive iterations of the cellular automaton
starting from any $x\in \{0,1\}^\mathbb{Z}$, as follows:
\begin{align*}
[F(x)]_j& = f(x_{j-1},x_j,x_{j+1})\\
[F^2(x)]_j& =
f\big(
f(x_{j-2},x_{j-1},x_{j  }),
f(x_{j-1},x_{j  },x_{j+1}),
f(x_{j  },x_{j+1},x_{j+2})
\big)\\
[F^3(x)]_j& =
f\Big(
 f\big(f(x_{j-3},x_{j-2},x_{j-1}), f(x_{j-2},x_{j-1},x_{j }), f(x_{j-1 },x_{j  },x_{j+1})\big)\\
&f\big(f(x_{j-2},x_{j-1},x_{j  }), f(x_{j-1},x_{j  },x_{j+1}), f(x_{j  },x_{j+1},x_{j+2})\big)\\
&f\big(f(x_{j-1},x_{j  },x_{j+1}), f(x_{j  },x_{j+1},x_{j+2}), f(x_{j+1},x_{j+2},x_{j+3})\big)
\Big)\\
&\ldots
\end{align*}
As remarked earlier, $[F^n(x)]_j$ is a function of 
$x_{j-n}, x_{j-n+1}, \ldots, x_{j+n}$. It will be convenient, therefore, to
define the function $f^n$ of $2n+1$ variables representing the functional dependence of
$[F^n(x)]_j$ on these variables, as follows:
$$
  f^n(x_1,x_2,\ldots x_{2n+1})=[F^n(x)]_{n+1}.
$$
We will call $f^n$ the \emph{$n$-th iterate of $f$}.

It is often possible to discover a pattern in formulae for $f^n$, and we will illustrate this
using rule 140 as an example.
For rule 140, its second iterate $f^2$ is given by
\begin{multline}
f^2_{140}(x_1,x_2,x_3,x_4,x_5)=
f_{140}(f_{140}(x_1,x_2,x_3),f_{140}(x_2,x_3,x_4),f_{140}(x_3,x_4,x_5))\\
=\overline{f}_{140}(x_1,x_2,x_3)f_{140}(x_2,x_3,x_4)
+f_{140}(x_1,x_2,x_3) f_{140}(x_2,x_3,x_4) f_{140}(x_3,x_4,x_5)\\
=(1-\overline{x}_1 x_2-x_1x_2x_3)(\overline{x}_2 x_3+x_2x_3x_4)\\
+
(\overline{x}_1 x_2+x_1 x_2 x_3) 
(\overline{x}_2 x_3+x_2 x_3 x_4)
(\overline{x}_3 x_4+x_3 x_4 x_5).
\end{multline}
Using the fact that for $x\in \{0,1\}$ we have $x^2=x$ and $x\overline{x}=0$, this simplifies to
$$f^2_{140}(x_1,x_2,x_3,x_4,x_5)=\overline{x}_2x_3 + x_2 x_3 x_4 x_5.$$
Similar (but more tedious) calculations for the third iterate yield
$$f_{140}^3(x_1,x_2,x_3,x_4,x_5,c_6,x_7)
=\overline{x}_{3}x_{4}+x_{3}x_{4}x_5 x_6 x_{7}.$$
Looking on the patterns which is developing, one can easily guess the general formula, 
$$f_{140}^n(x_1,x_2,\ldots, x_{2n+1})
=\overline{x}_{n}x_{n+1}+x_{n}x_{n+1}\ldots x_{2n+1}
=\overline{x}_{n}x_{n+1}+\prod_{i=n+1}^{2n+1} x_i.
$$
The ``guessed'' solution of the initial value problem, therefore,
can be formally expressed as follows.
\begin{theorem}
For elementary cellular automaton rule 140, 
for any $x\in \{0,1\}^{\mathbb{Z}}$ and $n>1$, the state of the $j$-th cell after $n$ iterations of the rule starting from $x$ is given by
\begin{equation}\label{r140sol}
[F_{140}^n(x)]_j=\overline{x}_{j-1}x_{j}+\prod_{i=n-1}^{2n} x_{i-n+j}.
\end{equation} 
\end{theorem}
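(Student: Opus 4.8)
The plan is to prove (\ref{r140sol}) by induction on $n$. For the base case one checks the formula directly at $n=1$: from the local rule, $[F_{140}(x)]_j=f_{140}(x_{j-1},x_j,x_{j+1})=\overline{x}_{j-1}x_j+x_{j-1}x_jx_{j+1}$, which coincides with the right-hand side of (\ref{r140sol}) at $n=1$; alternatively, since the statement only claims $n>1$, one may start from $n=2$ using the expression for $f_{140}^2$ computed above. Either way, the content of the argument lies in the inductive step.

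Assume (\ref{r140sol}) holds for a given $n$ and all $x$ and $j$. Writing $[F_{140}^{n+1}(x)]_j=f_{140}\big([F_{140}^n(x)]_{j-1},[F_{140}^n(x)]_j,[F_{140}^n(x)]_{j+1}\big)$ and using the density polynomial representation $f_{140}(a,b,c)=\overline{a}b+abc$ from (\ref{r140def}), I would abbreviate $a=[F_{140}^n(x)]_{j-1}$, $b=[F_{140}^n(x)]_j$, $c=[F_{140}^n(x)]_{j+1}$ and substitute the inductive hypothesis, so that each of $a$, $b$, $c$ becomes a sum of a ``short'' term $\overline{x}_{k-1}x_k$ and a ``long'' product $\prod_{i=k-1}^{k+n}x_i$, with $k=j-1,j,j+1$ respectively. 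The goal is then to show that $\overline{a}b+abc=\overline{x}_{j-1}x_j+\prod_{i=j-1}^{j+n+1}x_i$, which, after rewriting the product range, is exactly (\ref{r140sol}) with $n$ replaced by $n+1$.

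Verifying this identity is the routine part: one expands $\overline{a}b$ and $abc$ and applies repeatedly the Boolean identities $x^2=x$ and $x\overline{x}=0$, valid for $x\in\{0,1\}$. Almost every cross term vanishes, since any monomial containing both a factor $\overline{x}_m$ coming from a ``short'' term and a factor $x_m$ coming from a ``long'' (or an overlapping ``short'') term is zero; what survives in $\overline{a}b$ collapses, after one cancellation mediated by $x_{j-2}+\overline{x}_{j-2}=1$, to $\overline{x}_{j-1}x_j$, and what survives in $abc$ combines, by the same identity, into the single product $\prod_{i=j-1}^{j+n+1}x_i$. I expect the only obstacle to be bookkeeping: keeping the index ranges of the ``long'' products straight, and checking that precisely the right cross terms vanish and that the two telescoping cancellations are carried out correctly. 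There is no conceptual difficulty, and the identity $\overline{a}b+abc=\overline{x}_{j-1}x_j+\prod_{i=j-1}^{j+n+1}x_i$ can, if one wishes, be confirmed mechanically by treating the $x_i$ as idempotent Boolean variables.
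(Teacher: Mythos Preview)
Your proposal is correct and follows essentially the same route as the paper: the paper also proves (\ref{r140sol}) by the \emph{ab finito} method, setting $a=[F_{140}^n(x)]_{j-1}$, $b=[F_{140}^n(x)]_j$, $c=[F_{140}^n(x)]_{j+1}$, expanding $(1-a)b$ and $abc$, and simplifying with the Boolean identities $x^2=x$, $x\overline{x}=0$ and the cancellation via $x_{j-2}+\overline{x}_{j-2}=1$ that you anticipated. The only cosmetic difference is that the paper obtains $ab$ via $ab=b-(1-a)b$ rather than by direct expansion, but the substance is identical.
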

Of course, at this point it not really a theorem but only a conjecture, thus we need to prove it,
meaning that we need to verify that the solution given in eq. (\ref{r140sol})
is indeed correct. There are two ways of doing this.
We can prove that for $x\in \{0,1\}^\mathbb{Z}$  
\begin{equation} \label{rule140abinitio}
[F_{140}^{n+1}(x)]_j=[F_{140}^{n}(y)]_j,
\end{equation}
where $y=F(x)$. This will be called verification \emph{ab initio}.
Alternatively, we can show that
\begin{equation} \label{rule140abfinito}
[F_{140}^{n+1}(x)]_j=f_{140}\left(F_{140}^{n}(x)]_{j-1},F_{140}^{n}(x)]_{j},
F_{140}^{n}(x)]_{j+1}\right).
\end{equation}
This will be called verification \emph{ab finito}.
Depending on the rule, either the first or the second method might be easier.
For rule 140, we will use \emph{ab finito} method.
Let $a=F_{140}^{n}(x)]_{j-1}$, $b=F_{140}^{n}(x)]_{j}$ and
$c=F_{140}^{n}(x)]_{j+1}$.
The right hand side of eq. (\ref{rule140abfinito}) is
$$
f_{140}\left(a,b,c\right)=
(1-a)b + abc, 
$$
where
\begin{align}
a&=\overline{x}_{j-2}x_{j-1}+\prod_{i=n-1}^{2n} x_{i-n+j-1},\\
b&=\overline{x}_{j-1}x_{j}+\prod_{i=n-1}^{2n} x_{i-n+j},\\
c&=\overline{x}_{j  }x_{j+1}+\prod_{i=n-1}^{2n} x_{i-n+j+1}.
\end{align}
Let us compute $(1-a)b$ first,
\begin{multline}
(1-a)b=\left(1- \overline{x}_{j-2}x_{j-1}-\prod_{i=n-1}^{2n} x_{i-n+j-1}\right)
\left( \overline{x}_{j-1}x_{j}+\prod_{i=n-1}^{2n} x_{i-n+j} \right)\\
=
\overline{x}_{j-1}x_{j}
+\prod_{i=n-1}^{2n} x_{i-n+j} 
-\overline{x}_{j-2}x_{j-1} \overline{x}_{j-1}x_{j} 
-\overline{x}_{j-2}x_{j-1} \prod_{i=n-1}^{2n} x_{i-n+j} \\
-\overline{x}_{j-1} x_j \prod_{i=n-1}^{2n} x_{i-n+j-1}
- \left(\prod_{i=n-1}^{2n} x_{i-n+j-1} \right) \left(\prod_{i=n-1}^{2n} x_{i-n+j} \right)
\end{multline}
The third term in the above, $\overline{x}_{j-2}x_{j-1} \overline{x}_{j-1}x_{j}=0$, 
because $x_{j-1} \overline{x}_{j-1}=0$. The fifth term vanishes because $\overline{x}_{j-1}$
multiplied by $x_{j-1}$ appearing in the product yields zero.  The last term
yields
\begin{multline*}
 \left(\prod_{i=n-1}^{2n} x_{i-n+j-1} \right) \left(\prod_{i=n-1}^{2n} x_{i-n+j} \right)
 =
 \left(\prod_{i=n-2}^{2n-1} x_{i-n+j} \right) \left(\prod_{i=n-1}^{2n} x_{i-n+j} \right)
 =\prod_{i=n-2}^{2n} x_{i-n+j}, 
\end{multline*}
where in the first equality we changed the dummy index in the first product from $i$ to $i+1$ and in the second equality we used the fact that
$x^2=x$ for Boolean variables.
This gives
\begin{multline}
(1-a)b=\overline{x}_{j-1}x_{j}
+\prod_{i=n-1}^{2n} x_{i-n+j}
-\overline{x}_{j-2} \prod_{i=n-1}^{2n} x_{i-n+j}
 -\prod_{i=n-2}^{2n} x_{i-n+j}\\
 =\overline{x}_{j-1}x_{j}
+\prod_{i=n-1}^{2n} x_{i-n+j}
-\overline{x}_{j-2} \prod_{i=n-1}^{2n} x_{i-n+j}
 -x_{j-2}\prod_{i=n-1}^{2n} x_{i-n+j}\\
 =\overline{x}_{j-1}x_{j}
+\prod_{i=n-1}^{2n} x_{i-n+j} - \prod_{i=n-1}^{2n} x_{i-n+j}=
\overline{x}_{j-1}x_{j}.
\end{multline}

Now we need to compute $abc$. We start from $ab$, which can be computed using
the result above, $ab=b-(1-a)b$, yielding
\begin{equation}
ab=\prod_{i=n-1}^{2n} x_{i-n+j}.
\end{equation}
 Now we multiply this by $c$,
\begin{multline}
abc= \left( \prod_{i=n-1}^{2n} x_{i-n+j} \right) 
\left( \overline{x}_{j  }x_{j+1}+\prod_{i=n-1}^{2n} x_{i-n+j+1} \right)\\
=\overline{x}_{j  }x_{j+1}  \prod_{i=n-1}^{2n} x_{i-n+j} 
+\left( \prod_{i=n-1}^{2n} x_{i-n+j} \right) 
\left(\prod_{i=n-1}^{2n} x_{i-n+j+1} \right)\\
=\left( \prod_{i=n-1}^{2n} x_{i-n+j} \right) 
\left(\prod_{i=n-1}^{2n} x_{i-n+j+1} \right)=
\left( \prod_{i=n-1}^{2n} x_{i-n+j} \right) 
\left(\prod_{i=n}^{2n+1} x_{i-n+j} \right)\\=
\prod_{i=n-1}^{2n+1} x_{i-n+j} .
\end{multline}
The final result is
\begin{equation}
(1-a)b + abc=\overline{x}_{j-1}x_{j} +\prod_{i=n-1}^{2n+1} x_{i-n+j},
\end{equation}
which is exactly the left hand side of eq. (\ref{rule140abfinito}), that is,
$[F_{140}^{n+1}(x)]_j$. This verifies that the solution of the initial value problem for rule 140 given by eq. (\ref{r140sol}) is indeed correct.
\section{Rule 156}
The local function of rule 156 is given by
\begin{multline}\label{local156}
f_{156}(x_1,x_2,x_3) = \overline{x_1}x_2+x_1x_2x_3 + x_1\overline{x}_2\overline{x}_3
=f_{140}(x_1,x_2,x_3) + x_1\overline{x}_2\overline{x}_3,
\end{multline}
meaning that  $f_{140}$ and $f_{155}$  differ only on the block $x_1x_2x_3=100$, otherwise they produce the same output.
This suggest that rule 156 can be viewed as a ``perturbed'' version of rule 140,
and that there is a chance that the formula for $[F^n_{156}(x)]_i$ may be somewhat
related to the formula for $[F^n_{140}(x)]_i$ derived in the previous section.
Figure~\ref{rule140-156} shows spatiotemporal patterns generated by
rules 140 (left) and  156 (right), starting from identical initial configurations. It is clear that the triangles with vertical strips below, present in the pattern of rule 140, also appear in the pattern of rule 156. Furthermore, it seems that every cell which is in state 1
in the pattern of rule 140 is also in state 1 in rule 156. 
\begin{figure}
\begin{center}
\includegraphics[scale=0.3]{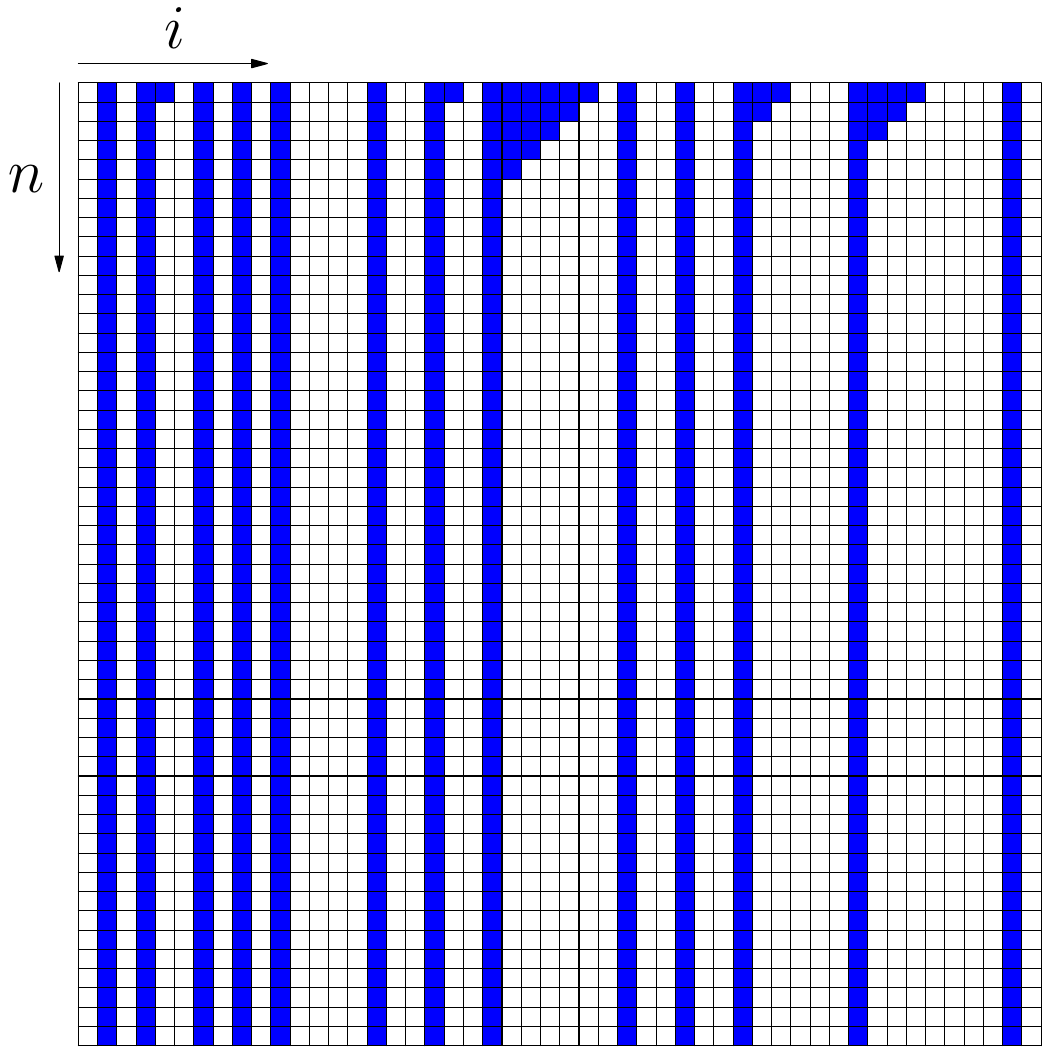}
\includegraphics[scale=0.3]{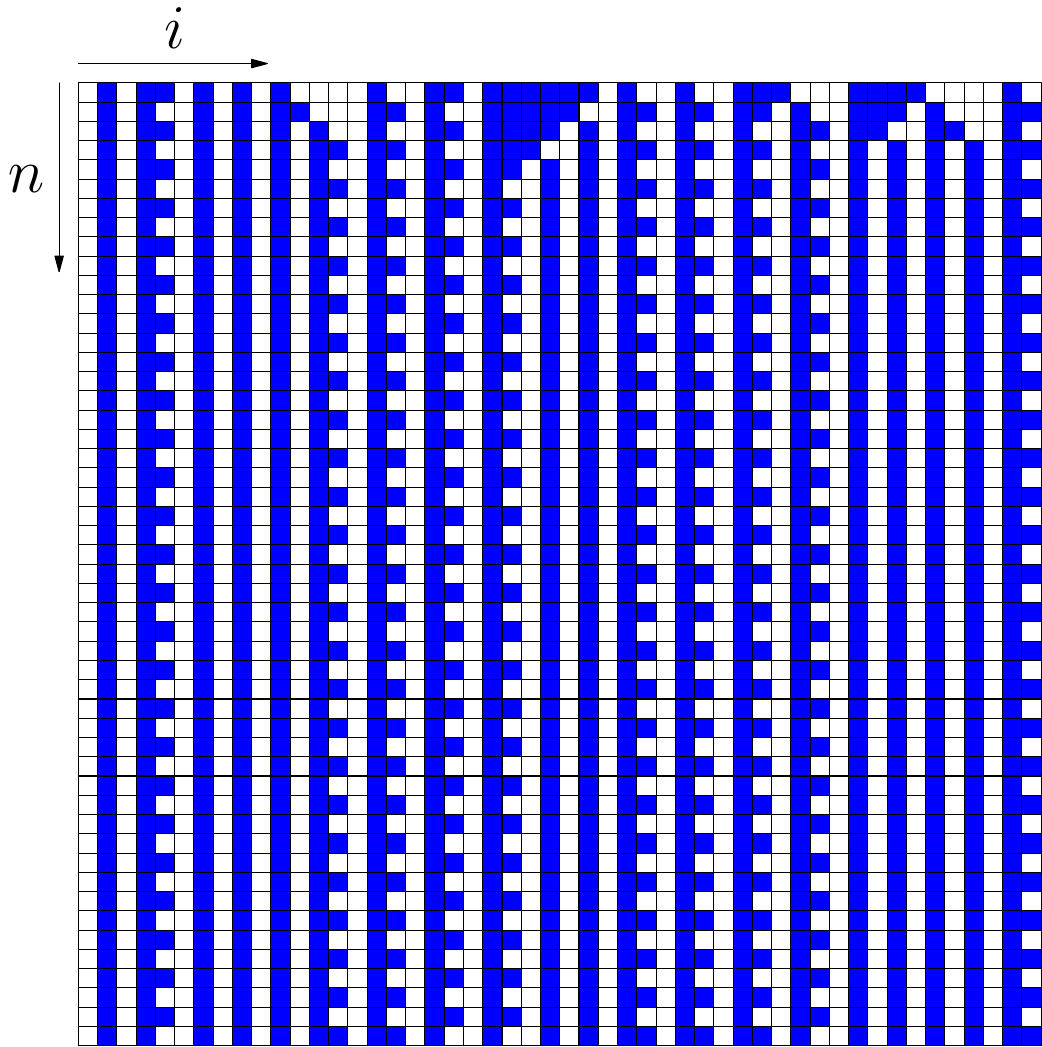}
\end{center}
\caption{Spatiotemporal patterns of rule 140 (left) and rule 156 (right).}
\label{rule140-156}
\end{figure}
This would mean that $[F^n_{156}(x)]_i \geq [F^n_{140}(x)]_i,$ or, in other words, 
$$
[F^n_{156}(x)]_i = [F^n_{140}(x)]_i + P(x,n,i),
$$
where $P(x,n,i)$ is some unknown non-negative function reflecting the
effects of the perturbation $x_1\overline{x}_2\overline{x}_3$.
\begin{figure}
\begin{center}
\includegraphics[scale=0.6]{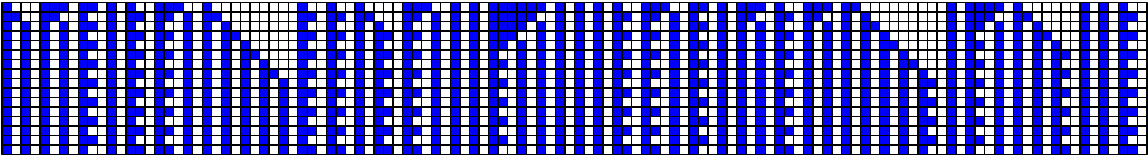}\\ $[F_{156}^n(x)]_i$\\[1em]
\includegraphics[scale=0.6]{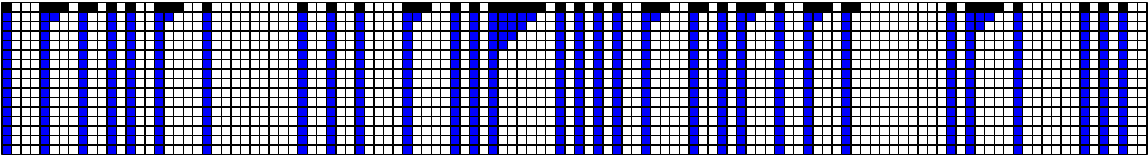}\\ $=[F_{140}^n(x)]_i$\\[1em]
\includegraphics[scale=0.6]{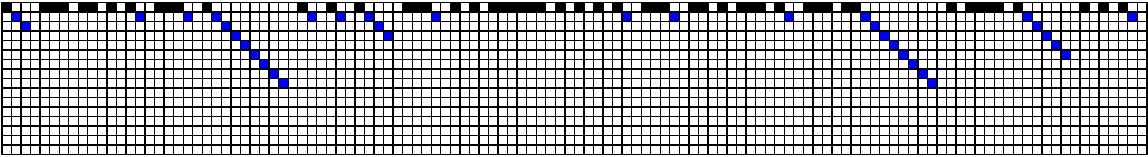}\\ $+D(x,n,i)$\\[1em]
\includegraphics[scale=0.6]{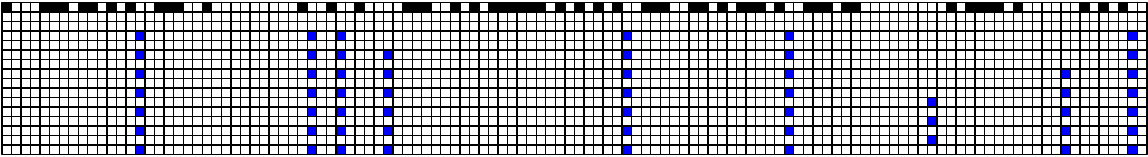}\\ $+B_1(x,n,i)$\\[1em]
\includegraphics[scale=0.6]{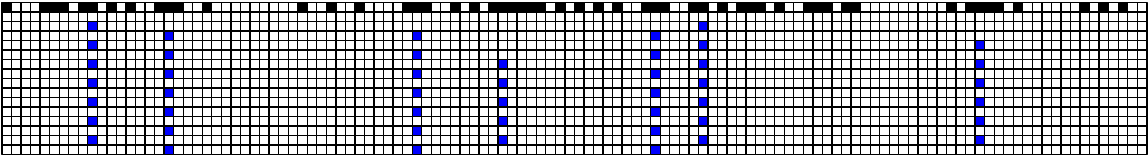}\\ $+B_2(x,n,i)$\\[1em]
\includegraphics[scale=0.6]{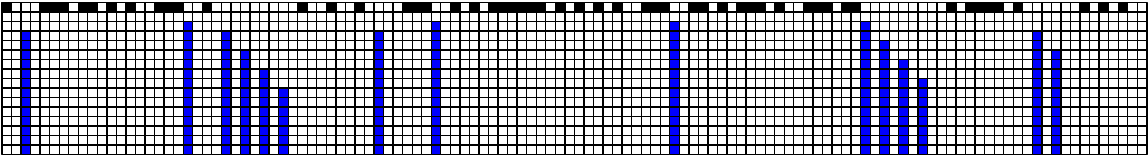}\\ $+S_1(x,n,i)$\\[1em]
\includegraphics[scale=0.6]{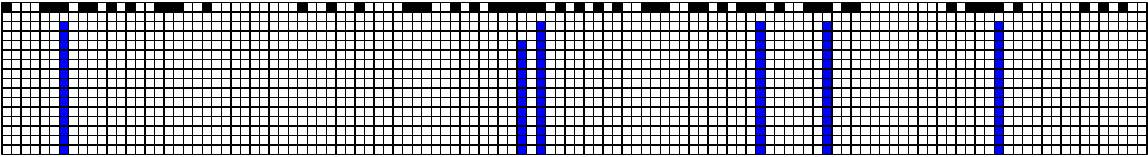}\\ $+S_2(x,n,i).$\\[1em]
\end{center}
\caption[Decomposition of the spatiotemporal pattern of rule 156.]{Spatiotemporal pattern of rule 156 (top) decomposed into six elements. Initial configuration is shown in black color in the decomposed elements.}
\label{rule156patt}
\end{figure}

We will now show how to construct the expression for the ``perturbation effect term'' $P(x,n,i)$ using a ``visual''
approach, that is, by analyzing the spatiotemporal pattern produced by rule 156 and by
decomposing is into simpler elements, each of them with relatively simple algebraic
description. The ``perturbation'' $P(x,n,i)$ appears in the spatiotemporal pattern of rule 156 as various additional elements not present in the pattern or rule 140, namely vertical strips, diagonal lines as well as ``blinkers'' (vertical lines of alternating 0's and 1's).  They are shown in Figure~\ref{rule156patt} marked with different labels, as follows:
\begin{itemize}
\item solid triangles and attached vertical strips as produced by rule 140;
\item diagonal lines, to be denoted by $D(x,n,i)$;
\item blinkers under diagonal lines, to be denoted by $B_1(x,n,i)$;
\item blinkers under solid triangles, to be denoted by $B_2(x,n,i)$;
\item vertical strips under diagonal lines, to be denoted by $S_1(x,n,i)$;
\item vertical strips under solid triangles, to be denoted by $S_2(x,n,i)$.
\end{itemize}
Cells in state 0 are shown as white and 1's in the initial configuration are black.
With the above labeling, the decomposition can now be formally stated as
\begin{multline}\label{rule156corrections2}
[F^n_{156}(x)]_i=[F^n_{140}(x)]_i \\+ D(x,n,i)
+ B_1(x,n,i)
+ B_2(x,n,i)
+ S_1(x,n,i)
+ S_2(x,n,i).
\end{multline}
The expression for $[F^n_{140}(x)]_i$ is given by eq. (\ref{r140sol}),
but we need to  construct the formulae corresponding to the five remaining terms
$D(x,n,i)$, $B_1(x,n,i)$, $ B_2(x,n,i)$, $ S_1(x,n,i)$ and $ S_2(x,n,i).$

We will not go into details of the construction of the relevant expressions for
all five terms, but since the construction is quite similar for all of them,
we will show how to do it for the first two.  

 Let us start from the diagonal line $D(x,n,i)$. Figure~\ref{rule156patt} suggests that the cell $j$ after $n$ iterations will belong to the diagonal  line if  $x_{j-n}=1$
and if it lies below the cluster of continuous zeros, so that 
$x_{m}=0$ for all $m \in \{j-n+1,  \ldots, j+1 \}$. The last condition will be realized if the product $\prod _{m=j-n+1}^{j+1}\overline{x}_{m}$
is equal to 1,  yielding the expression
\begin{equation}\label{correctionsbegin}
D(x,n,j)=x_{j-n}\prod _{m=j-n+1}^{j+1}\overline{x}_{m}.
\end{equation}

Blinkers are the next. Blinkers of  $B_1(x,n,i)$ type lie below the diagonal lines.
\begin{figure}
\begin{center}
\includegraphics[width=11cm]{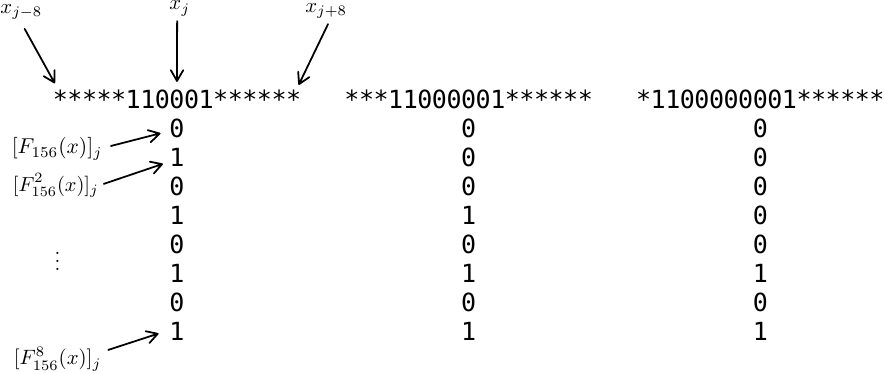}
\end{center}
\caption{Three initial configurations producing blinkers of even type.}\label{blinkfig}
\end{figure}
They occur below clusters of zeros in the initial configuration,
and can be in state 1 on even or odd $n$ values. Let us consider first those which
are 1 on even values, to be denoted by $B_1^{ev}$. Careful analysis of the 
spatiotemporal patterns reveals that they need to lie below clusters of zeros of odd length
preceded by 11 and terminated by 1. For the sake of example consider $n=8$.
In this case $B_1^{ev}(x,8,j)$ can possibly depend  on  8 left and 8 right 
neighbours of $x_j$. The blinker lies below the penultimate zero in the cluster of 0's,
therefore the cluster can only be of length 3, 5, or 7. Figure \ref{blinkersfig}
shows the initial configuration of $x_j$ with 8 left and 8 right neighbours for all three cases,
together with the resulting column below. Stars denote arbitrary symbols. 

We can see that blinking can start sooner or later depending on the 
length of the cluster  of zeros. This is because blinkers lie below diagonals. The three configurations shown above
correspond to
\begin{align*}
(x_{j-3},x_{j-2},\ldots, x_2)&= (1,1,0,0,0,1),\\
(x_{j-5},x_{j-2},\ldots, x_2) &=(1,1,0,0,0,0,0,1),\\
(x_{j-7},x_{j-2},\ldots, x_2) &= (1,1,0,0,0,0,0,0,0,1).
\end{align*}
This means that  
\begin{align*}B^{ev}_1(x,8,j)=&x_{j-3}x_{j-2}\overline{x}_{j-1}\overline{x}_{j}\overline{x}_{j+1}x_{j+2}\\
+&x_{j-5}x_{j-4}\overline{x}_{j-3}\overline{x}_{j-2}\overline{x}_{j-1}\overline{x}_{j}\overline{x}_{j+1}x_{j+2}\\
+&x_{j-7}x_{j-6}\overline{x}_{j-5}\overline{x}_{j-4}\overline{x}_{j-3}\overline{x}_{j-2}\overline{x}_{j-1}\overline{x}_{j}\overline{x}_{j+1}x_{j+2}.
\end{align*}
Defining
$$
\mathcal{P}(n)=
\begin{cases}
  1  & n \text{ is even,} \\
  0 & n \text{ is odd,}
\end{cases}
$$
this can be written as
$$B^{ev}_1(x,8,j)= 
 \sum _{r=2}^{8} \left( \mathcal{P
} \left( r+1 \right) x_{{j-r}}x_{{j-r+1}} x_{{j+2}} \prod _{m=j-r+2}^{j+1}\overline{x}_{{m
}} \right).
$$
It is now straightforward to guess the general formula for blinker which is 1 on even $n$,
\begin{equation*}
 B_1^{ev}(x,n,j)=\mathcal{P} \left( n \right) \sum _{r=2}^{n} \left( \mathcal{P} \left( r+1 \right) x_{{j-r}}x_{{j-r+1}} x_{{j+2}} \prod _{m=j-r+2}^{j+1}\overline{x}_{{m
}} \right). 
\end{equation*}
For blinkers which are 1 on odd $n$ the analysis is very similar, except that they 
lie below clusters of even number of zeros preceded by 01. Denoting them by $B_1^{odd}$,
this yields
\begin{equation*}
 B_1^{odd}(x,n,j)=\mathcal{P} 
\left( n+1 \right) 
\sum _{r=2}^{n}
 \left( \mathcal{P} \left( r \right) 
\overline{x}_{{j-r}} 
   x_{{j-r
+1}} x_{{j+2}} \prod _{m=j-r+2}^{j+1}\overline{x}_{{m}} \right). 
\end{equation*}
The final expression for blinkers $B_1$ is the sum of $B_1^{ev}$ and $B_1^{odd}$,
\begin{multline}
 B_1(x,n,j)=\mathcal{P} \left( n \right) \sum _{r=2}^{n} \left( \mathcal{P} \left( r+1 \right) x_{{j-r}}x_{{j-r+1}} x_{{j+2}} \prod _{m=j-r+2}^{j+1}\overline{x}_{{m
}} \right) 
\\+\mathcal{P} 
\left( n+1 \right) 
\sum _{r=2}^{n}
 \left( \mathcal{P} \left( r \right) 
\overline{x}_{{j-r}} 
   x_{{j-r
+1}} x_{{j+2}} \prod _{m=j-r+2}^{j+1}\overline{x}_{{m}} \right). 
\end{multline}

The other blinkers, $B_2(x,n,j)$, are very similar, except that they
are under clusters of 1's preceded by 0 and terminated by either 00 or 01,
depending on the parity of $n$ for which they are in state 1. The corresponding expression is then straightforward to
construct,
\begin{multline}
 B_2(x,n,j)=\mathcal{P} \left( n+1 \right) \sum _{m=0}^{n-1} \left(
 \mathcal{P} \left( m+1 \right)  
\overline{x}_{{j-2}}
 \overline{x}_{{j+m+1}} 
 \overline{x}_{{j+m+2}} 
\prod _{k=j-1}^{j+m}x_{{k}} \right)\\ 
 +\mathcal{P} \left( n \right) \sum _{m=0}^{n-1}
 \left( \mathcal{P} \left( m \right) 
   \overline{x}_{{j-2}}
\overline{x}_{{j+m+1}} 
 x_{{j+m+2}} \prod _{k=j-1}^{j+m}x_{k}  \right). 
\end{multline}
The final two items we need to consider are the two types of vertical strips. The first ones are strips under clusters of 1's starting with 11 or 01. Using similar reasoning as for the blinkers above, this corresponds to the expression
\begin{multline}
 S_1(x,n,j)=\sum _{k=1}^{n} \left( \mathcal{P} \left( k \right) x_{{j-k}
}x_{{j-k+1}}\prod _{m=j-k+2}^{1+j}\overline{x}_{{m}} \right)\\ +\sum _{k=2}^{n}
 \left( \mathcal{P} \left( k+1 \right) 
  \overline{x}_{{j-k}} 
   x_{{j
-k+1}}\prod _{m=j-k+2}^{1+j}\overline{x}_{{m}} \right) .
\end{multline}
Here we again deal with structures occurring below 
clusters of 0's, hence the products $\prod _{m=j-k+2}^{1+j}\overline{x}_{{m}}$. There are two sums because
there are two different expressions depending on 
the parity, and $\mathcal{P} ( k )$
and $\mathcal{P} ( k +1)$ take care of this.
The other type of strips,  $S^{(1)}_{n,j}$, corresponds to analogous expression,
\begin{multline}\label{correctionsend}
S_2(x,n,j)=\sum _{k=2}^{n} \left(  
\mathcal{P}(k) x_{{j+k}} \overline{x}_{{j-1+k}}  
  \prod _{m=j-2}^{j+k-2}x_{{m}} \right)
  +\\
  \sum _{k=2}^{n} \left(  
\mathcal{P}(k+1) \overline{x}_{{j+k}} \overline{x}_{{j-1+k}} 
  \prod _{m=j-2}^{j+k-2}x_{{m}} \right). 
\end{multline}

Eq.  (\ref{rule156corrections2}) together with expressions
defined in eqs. (\ref{correctionsbegin}--\ref{correctionsend}) provide a complete solution of the initial value problem for rule 156.
\begin{theorem}\label{maintheorem156}
For elementary cellular automaton rule 156, 
for any $x\in \{0,1\}^{\mathbb{Z}}$ and $n>1$, the state of the $j$-th cell after $n$ iterations of the rule starting from $x$ is given by
\begin{multline}\label{rule156solution}
[F^n_{156}(x)]_j=
\overline{x}_{j-1}x_{j}+\prod_{i=n-1}^{2n} x_{i-n+j}
 \\+ D(x,n,j)
+ B_1(x,n,j)
+ B_2(x,n,j)
+ S_1(x,n,j)
+ S_2(x,n,j),
\end{multline}
where functions $D$, $B_1$, $B_2$, $S_1$ and $S_2$ are defined by
\begin{align*}
D(x,n,j)&=x_{j-n}\prod _{m=j-n+1}^{j+1}\overline{x}_{m},\\
 B_1(x,n,j)&=\mathcal{P} \left( n \right) \sum _{r=2}^{n} \left( \mathcal{P
} \left( r+1 \right) x_{{j-r}}x_{{j-r+1}} x_{{j+2}} \prod _{m=j-r+2}^{j+1}\overline{x}_{{m
}} \right) \\
&\qquad \qquad +\mathcal{P} 
\left( n+1 \right) 
\sum _{r=2}^{n}
 \left( \mathcal{P} \left( r \right) 
\overline{x}_{{j-r}} 
   x_{{j-r
+1}} x_{{j+2}} \prod _{m=j-r+2}^{j+1}\overline{x}_{{m}} \right), \\
 B_2(x,n,j)&=\mathcal{P} \left( n+1 \right) \sum _{m=0}^{n-1} \left(
 \mathcal{P} \left( m+1 \right)  
\overline{x}_{{j-2}}
 \overline{x}_{{j+m+1}} 
 \overline{x}_{{j+m+2}} 
\prod _{k=j-1}^{j+m}x_{{k}} \right)\\ 
 &\qquad \qquad+\mathcal{P} \left( n \right) \sum _{m=0}^{n-1}
 \left( \mathcal{P} \left( m \right) 
   \overline{x}_{{j-2}}
\overline{x}_{{j+m+1}} 
 x_{{j+m+2}} \prod _{k=j-1}^{j+m}x_{k}  \right), \\
 S_1(x,n,j)&=\sum _{k=1}^{n} \left( \mathcal{P} \left( k \right) x_{{j-k}
}x_{{j-k+1}}\prod _{m=j-k+2}^{1+j}\overline{x}_{{m}} \right)\\ 
&\qquad \qquad+\sum _{k=2}^{n}
 \left( \mathcal{P} \left( k+1 \right) 
  \overline{x}_{{j-k}} 
   x_{{j
-k+1}}\prod _{m=j-k+2}^{1+j}\overline{x}_{{m}} \right),\\
S_2(x,n,j)&=\sum _{k=2}^{n} \left(  
\mathcal{P}(k) x_{{j+k}} \overline{x}_{{j-1+k}}  
  \prod _{m=j-2}^{j+k-2}x_{{m}} \right)\\
  &\qquad \qquad+
  \sum _{k=2}^{n} \left(  
\mathcal{P}(k+1) \overline{x}_{{j+k}} \overline{x}_{{j-1+k}} 
  \prod _{m=j-2}^{j+k-2}x_{{m}} \right). 
\end{align*}
\end{theorem}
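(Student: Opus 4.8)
The plan is to prove (\ref{rule156solution}) by induction on $n$, following the \emph{ab finito} route already used for rule 140. Write $R(x,n,j)$ for the right-hand side of (\ref{rule156solution}). The base case $n=2$ is purely mechanical: evaluating the (short) sums in $D,B_1,B_2,S_1,S_2$ and reducing with $x_k^2=x_k$ and $x_k\overline{x}_k=0$ produces an explicit polynomial in $x_{j-2},\ldots,x_{j+2}$, which one checks against $f_{156}^2$ — most directly by running through all $2^5$ Boolean assignments, or by expanding $f_{156}(f_{156}(x_1,x_2,x_3),f_{156}(x_2,x_3,x_4),f_{156}(x_3,x_4,x_5))$ and comparing. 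For the inductive step, assume $[F_{156}^n(x)]_i=R(x,n,i)$ for every $i$; then set $a=R(x,n,j-1)$, $b=R(x,n,j)$, $c=R(x,n,j+1)$, substitute into
$$
f_{156}(a,b,c)=(1-a)b+abc+a(1-b)(1-c)=f_{140}(a,b,c)+a(1-b)(1-c),
$$
expand, reduce modulo $x_k^2=x_k$ and $x_k\overline{x}_k=0$, and show the result is $R(x,n+1,j)$; this verifies the \emph{ab finito} recurrence (the analogue of (\ref{rule140abfinito})) for rule 156, hence the theorem.

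What makes this feasible is one structural fact about the decomposition itself: the atomic monomials making up $R(x,n,j)$ — the two from the rule-140 part, the one from $D$, the two each from $B_1,B_2,S_1,S_2$ indexed by the summation variable and weighted by the $\mathcal{P}$ factors — have \emph{pairwise disjoint supports}, i.e.\ for any Boolean $x$ at most one of them is nonzero. This is because each monomial pins down an incompatible local pattern of the initial configuration near position $j$: $D$ forces $x_{j-1}=x_j=0$, the product $\prod_{i=n-1}^{2n}x_{i-n+j}$ forces $x_{j-1}=1$, the term $\overline{x}_{j-1}x_j$ forces $x_j=1$, distinct $S_1$-monomials start a $0$-cluster at distinct positions, and the $\mathcal{P}(n)$ versus $\mathcal{P}(n+1)$ prefactors keep the even- and odd-parity variants from ever being active together. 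Granted disjointness, $R(x,n,j)\in\{0,1\}$, and — more usefully — every product $ab$, $ac$, $abc$ in the expansion of $f_{156}(a,b,c)$ collapses to a sum of products of \emph{single} monomials, almost all of which are identically zero. The remaining work is bookkeeping guided by the spatiotemporal picture: for each of the six features one identifies which monomial of $a$, $b$, or $c$ (level $n$, cells $j-1,j,j+1$) feeds it and checks the surviving contribution is exactly the corresponding monomial of $R(x,n+1,j)$; the $a(1-b)(1-c)$ piece — the $x_1\overline{x}_2\overline{x}_3$ part that distinguishes $f_{156}$ from $f_{140}$ — is what creates the ``birth'' terms where a new diagonal, blinker, or strip first appears directly beneath a $100$ block.

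The main obstacle is the volume of this case analysis rather than any single hard idea: there are many (feature)$\,\times\,$(parity)$\,\times\,$(which of $a,b,c$) combinations, and for each one the product must be shown either to vanish or to land at precisely the right index and parity in $R(x,n+1,j)$. The $\mathcal{P}$ factors must be threaded through consistently — the parity of $n$ controls whether a blinker cell is lit and flips each step, so a monomial tagged by $\mathcal{P}(r)$ at level $n$ may resurface tagged by $\mathcal{P}(r\pm 1)$ at level $n+1$ — and the endpoints of the $r$-, $k$-, $m$-sums need care, since several sums gain or lose a term as $n\to n+1$ and those boundary terms are exactly where features are created or terminate; the small cases $n=2,3$, where some sums are empty or reduce to one term, must be reconciled by hand with the base case. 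Finally, a methodological remark: the \emph{ab initio} alternative — substituting $y_i=f_{156}(x_{i-1},x_i,x_{i+1})$ and evaluating $R(\cdot,n,j)$ at $y$ — is less convenient here, since the one-step image $y$ no longer ranges over all of $\{0,1\}^{\mathbb{Z}}$ and one would have to separately characterize which blocks can occur in it; working \emph{ab finito} avoids that.
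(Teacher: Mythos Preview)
Your proposal is a valid plan, but it takes a genuinely different route from the paper's own proof. The paper proves Theorem~\ref{maintheorem156} \emph{ab initio}, not \emph{ab finito}: it sets $y=F_{156}(x)$ and shows $R(y,n,j)=R(x,n+1,j)$ for every $x$. The key tools are two product lemmas --- for $n\ge 2$,
\[
\prod_{i=0}^{n} y_i=\prod_{i=0}^{n+1} x_i,\qquad
\prod_{i=0}^{n}\overline{y}_i=\prod_{i=-1}^{n}\overline{x}_i,
\]
together with closed forms for $y_k$, $\overline{y}_k$, $y_k y_{k+1}$, $\overline{y}_k y_{k+1}$, $\overline{y}_k\overline{y}_{k+1}$ in the $x$-variables. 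These identities reduce each of $[F^n_{140}(y)]_j$, $D(y,n,j)$, $B_1(y,n,j)+B_2(y,n,j)$, $S_1(y,n,j)+S_2(y,n,j)$ to the corresponding term at level $n+1$ in $x$, up to a single correction $x_{j-2}x_{j-1}\overline{x}_j\overline{x}_{j+1}$ that cancels between the rule-140 part and the $S$-part. The bookkeeping is thus organized feature-by-feature and stays linear in the number of summands.

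Your \emph{ab finito} strategy is sound in principle --- the disjoint-support observation is exactly the geometric content of the decomposition and does guarantee $R(x,n,j)\in\{0,1\}$ --- but the expansion of $f_{156}(a,b,c)$ with $a,b,c$ each a sum of $O(n)$ monomials produces $O(n^2)$ and $O(n^3)$ cross-terms to eliminate, so the case analysis is substantially heavier than the paper's. Your stated reason for rejecting the \emph{ab initio} route is also a misconception: one never needs to characterize which blocks occur in $y$, because the identity $R(y,n,j)=R(x,n+1,j)$ is proved as a polynomial identity in the $x$-variables after substituting $y_i=f_{156}(x_{i-1},x_i,x_{i+1})$; the image of $F_{156}$ is irrelevant. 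In short, both approaches work, but the paper's \emph{ab initio} argument, powered by the product lemmas, is considerably cleaner than the monomial-by-monomial expansion you outline.
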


 A formal proof of the correctness of the above solution
formula can be obtained by the \emph{ab initio} method described earlier, that is, by verifying that for any $j \in \mathbb{Z}$, $n>0$ and $x\in \{0,1\}^{\mathbb{Z}}$,
\begin{equation} \label{proofformula}
[F_{156}^{n+1}(x)]_j=[F_{156}^{n}(y)]_j,
\end{equation}
where $y_i=f_{156}(x_{i-1},x_i,x_{i+1})$. The proof will be presented in the next section.
\section{Proof of  correctness of the  solution formula}
On the right hand side of the equality we want to prove, eq. (\ref{proofformula}), we have
 $[F_{156}^{n}(y)]_j$. If we expand this using the solution formula, we will
 obtain a number of terms involving various products of $y$ variables. We need to prove two lemmas which will help to simplify these products.
\begin{lemma}\label{lemma1}
Let $f_{156}$ be the local function of the ECA 156 given by eq. (\ref{local156})
and let $x_{-1},x_0, x_1, \ldots x_{n+1}$  be Boolean variables, where $n>1$. If $y_i=f_{156}(x_{i-1},x_i,x_{i+1})$ for all $i\in \{0, 1,\ldots,n\}$, then
\begin{equation}\label{prodlemma}
(i) \,\,\,\,\,\,\,\, \prod_{i=0}^n y_i = \prod_{i=0}^{n+1} x_i, 
\end{equation}
and
\begin{equation}\label{prodlemmaii}
(ii) \,\,\,\,\,\,\,\, \prod_{i=0}^n \overline{y}_i = \prod_{i=-1}^{n} \overline{x}_i. 
\end{equation}
\end{lemma}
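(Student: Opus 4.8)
The plan is to read each identity as a statement about Boolean words. Since a product of $\{0,1\}$-valued variables equals $1$ precisely when every factor equals $1$, part $(i)$ is equivalent to the assertion that $y_0=y_1=\dots=y_n=1$ holds if and only if $x_0=x_1=\dots=x_{n+1}=1$, and part $(ii)$ to the assertion that $y_0=y_1=\dots=y_n=0$ holds if and only if $x_{-1}=x_0=\dots=x_n=0$. I would first dispatch the ``if'' direction of each, since it is immediate from the value table of $f_{156}$, which returns $1$ exactly on the input blocks $010$, $011$, $100$, $111$: if all of $x_0,\dots,x_{n+1}$ equal $1$, then each block $(x_{i-1},x_i,x_{i+1})$ feeding $y_i$ is $(1,1,1)$ for $i\ge 1$ and lies in $\{(0,1,1),(1,1,1)\}$ for $i=0$, and $f_{156}$ returns $1$ on all of these; symmetrically, if all of $x_{-1},\dots,x_n$ vanish, then each such block is $(0,0,0)$ or $(0,0,1)$, on which $f_{156}$ returns $0$.

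For the forcing direction of $(i)$ the naive guess ``$y_i=1$ implies $x_i=1$'' is false, since $f_{156}(1,0,0)=1$; the correct statement is a two-cell domino: $y_i=1$ and $y_{i+1}=1$ together imply $x_i=1$. Indeed, if $x_i=0$ then $y_i=1$ forces the block $(x_{i-1},x_i,x_{i+1})=(1,0,0)$, whence $x_{i+1}=0$, and then $y_{i+1}=f_{156}(0,0,x_{i+2})=0$, a contradiction. Applying the domino for $i=0,\dots,n-1$ yields $x_0=\dots=x_{n-1}=1$; then $y_{n-1}=f_{156}(1,1,x_n)=1$ forces $x_n=1$ because $f_{156}(1,1,0)=0$, and finally $y_n=f_{156}(1,1,x_{n+1})=1$ forces $x_{n+1}=1$. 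This is exactly where the hypothesis $n>1$ enters: we invoked $y_{n-1}$ with both of its left arguments already pinned to $1$.

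Part $(ii)$ runs the same way with the direction of propagation reversed. The domino now reads: $y_{i-1}=0$ and $y_i=0$ together imply $x_i=0$, for if $x_i=1$, then $y_i=0$ forces $(x_{i-1},x_i,x_{i+1})=(1,1,0)$, and then $y_{i-1}=f_{156}(x_{i-2},1,1)=1$ regardless of $x_{i-2}$, a contradiction. Applying it for $i=1,\dots,n$ gives $x_1=\dots=x_n=0$; then $y_1=f_{156}(x_0,0,0)=0$ forces $x_0=0$ since $f_{156}(1,0,0)=1$, and $y_0=f_{156}(x_{-1},0,0)=0$ forces $x_{-1}=0$, so that $x_{-1}=\dots=x_n=0$ as needed.

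Every computation here is a single evaluation of $f_{156}$ on one $3$-block, so there is no genuine technical difficulty; the only thing to watch is the bookkeeping at the two ends of the index range, that is, making sure the overlapping constraints supplied by the hypothesis $n>1$ actually reach the boundary cells $x_{-1},x_0$ and $x_n,x_{n+1}$. A more structural alternative would be to observe that a long word all of whose length-$3$ factors lie in $\{010,011,100,111\}$, respectively in the complementary set $\{000,001,101,110\}$, is heavily constrained, the associated de Bruijn subgraph having essentially a single long path; but the domino argument above is quicker to write out in full.
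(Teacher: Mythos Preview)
Your argument is correct. Interpreting each side as a $\{0,1\}$-valued product and reducing the identity to a biconditional on the events ``all $y_i=1$'' and ``all $y_i=0$'' is sound, and the domino propagations you describe are accurate line by line; the boundary bookkeeping at indices $n-1,n$ in part~(i) and at indices $0,1$ in part~(ii) is exactly where the hypothesis $n>1$ is needed, as you note.

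The paper takes a different route: it proves both identities by a direct algebraic induction on $n$, verifying the base case $n=2$ by expanding the triple product $y_0y_1y_2$ (respectively $\overline{y}_0\overline{y}_1\overline{y}_2$) as a polynomial in the $x_i$'s and simplifying via $x^2=x$ and $x\overline{x}=0$, and then multiplying the inductive hypothesis $\prod_{i=0}^n y_i=\prod_{i=0}^{n+1}x_i$ by the polynomial expression for $y_{n+1}$ and observing that two of the three resulting terms vanish. Your combinatorial argument is shorter and explains \emph{why} the identity holds---the only way for all three-letter factors of $y$ to land in the $1$-set (resp.\ $0$-set) of $f_{156}$ is for the underlying $x$-word to be all $1$'s anchored on the right (resp.\ all $0$'s anchored on the left). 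The paper's algebraic induction, on the other hand, dovetails with the polynomial manipulations used elsewhere in the verification of Theorem~\ref{maintheorem156}, so it is stylistically consistent with the rest of the proof even if less illuminating in isolation.
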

\begin{proof}
We will prove (i) by induction. 
Recall that $f_{156}(x_1,x_2,x_3) = x_1x_2x_3 + x_1\overline{x}_2\overline{x}_3 +\overline{x_1}x_2$,
thus  for $n=2$ we have
\begin{multline}\prod_{i=0}^2 y_i =y_0y_1y_2 =
f_{156}(x_{-1},x_0,x_1)
f_{156}(x_{0},x_1,x_2)
f_{156}(x_{1},x_2,x_3)\\=
 \left( x_{-1}x_{0}x_{1}+x_{-1}\overline{x}_{0}\overline{x}_{1}+x_{0}\overline{x}_{-1}
 \right) \\ \times \left( x_{0}x_{1}x_{2}+x_{0}\overline{x}_{1}\overline{x}_{2}+x_{1}\overline{x}_
{0} \right)  \left( x_{1}x_{2}x_{3}+x_{1}\overline{x}_{2}\overline{x}_{3}+x_{
2}\overline{x}_{1} \right). 
\end{multline}
Expanding the above expression and simplifying it using $x_i^p=x_i$, $\overline{x}_i^p=\overline{x}_i$ and $x_i+\overline{x}_i=1$,
one obtains
$$y_0y_1y_2 = x_0 x_1 x_2 x_3,$$
confirming correctness of eq. (\ref{prodlemma}) for $n=2$.

Now let us suppose that eq. (\ref{prodlemma}) is valid for a given $n$. We have
\begin{multline}
\prod_{i=0}^{n+1} y_i=
y_{n+1}\prod_{i=0}^{n} y_i=y_{n+1}\prod_{i=0}^{n+1} x_i=
\left(x_{n}x_{n+1}x_{n+2}+x_{n}\overline{x}_{n+1}\overline{x}_{n+2}+x_{n+1}\overline{x}_{n} \right) \prod_{i=0}^{n+1} x_i\\
= x_{n}x_{n+1}x_{n+2} \prod_{i=0}^{n+1} x_i+
\underbrace{x_{n}\overline{x}_{n+1}\overline{x}_{n+2} \prod_{i=0}^{n+1} x_i}_{=0}+
\underbrace{x_{n+1}\overline{x}_{n} \prod_{i=0}^{n+1} x_i}_{=0}
=\prod_{i=0}^{n+2} x_i,
\end{multline}
where we used the properties of Boolean variables $x_{n+1}\overline{x}_{n+1}=0$ and $x_{n}\overline{x}_{n}=0$ as well as
$x_{n+1}^2=x_{n+1}$ and $x_{n}^2=x_{n}$. This verifies the validity of  eq. (\ref{prodlemma}) for $n+1$, completing the 
induction step. 

Proof of (ii) is very similar. First we will note that
 $f_{156}(x_1,x_2,x_3)$ can be explicitly expressed as
$$f_{156}(x_1,x_2,x_3) = \begin{cases}
  1  &  \text{ if $x_1x_2x_3=010, 011, 100$ or  $111$, } \\
  0 & \text{ otherwise.}
\end{cases}$$
From this it follows that
$$1-f_{156}(x_1,x_2,x_3) = \begin{cases}
  1  &  \text{ if $x_1x_2x_3=000, 001, 101$ or  $110$, } \\
  0 & \text{ otherwise,}
\end{cases}$$
thus we can write
\begin{multline}\label{oneminusy}
1-f_{156}(x_1,x_2,x_3)=\overline{x}_1\overline{x}_2\overline{x_3}+\overline{x}_1\overline{x}_2x_3
+x_1\overline{x}_2x_3 + x_1x_2\overline{x}_3\\
=\overline{x}_1\overline{x}_2+x_1\overline{x}_2x_3 + x_1x_2\overline{x}_3.
\end{multline}
For $n=2$ the left hand side of eq. (\ref{prodlemmaii})
becomes
\begin{multline}\prod_{i=0}^2 \overline{y}_i =\overline{y}_0\overline{y}_1\overline{y}_2 =
(1-f_{156}(x_{-1},x_0,x_1))
(1-f_{156}(x_{0},x_1,x_2))
(1-f_{156}(x_{1},x_2,x_3))\\=
 \left( \overline{x}_{-1}\overline{x}_0+x_{-1}\overline{x}_0x_1 + x_{-1}x_0\overline{x}_1  \right) \\ 
 \times 
 \left(\overline{x}_0\overline{x}_1+x_0\overline{x}_1x_2 + x_0x_1\overline{x}_2 \right) 
 \left( \overline{x}_1\overline{x}_2+x_1\overline{x}_2x_3 + x_1x_2\overline{x}_3 \right). 
\end{multline}
Expanding the above expression and simplifying it in a similar fashion as before we obtain
$$\overline{y}_0\overline{y}_1\overline{y}_2=\overline{x}_{-1} \overline{x}_{0} \overline{x}_{1} \overline{x}_{2},$$
confirming correctness of eq. (\ref{prodlemmaii}) for $n=2$.

For the induction step, assume eq.  eq. (\ref{prodlemmaii}) is valid for $n$.
Then we have
\begin{multline}
\prod_{i=0}^{n+1} \overline{y}_i=
\overline{y}_{n+1}\prod_{i=0}^{n} \overline{y}_i=
\overline{y}_{n+1}\prod_{i=-1}^{n} \overline{x}_i=
\left(  \overline{x}_{n}\overline{x}_{n+1} + x_{n}x_{n+1}\overline{x}_{n+2}+x_{n}\overline{x}_{n+1} x_{n+2} \right) \prod_{i=-1}^{n} \overline{x}_i\\
=  \overline{x}_{n}\overline{x}_{n+1}  \prod_{i=-1}^{n} \overline{x}_i
+  \underbrace{x_{n}x_{n+1}\overline{x}_{n+2}  \prod_{i=-1}^{n} \overline{x}_i}_{=0}
+   \underbrace{x_{n}\overline{x}_{n+1} x_{n+2}  \prod_{i=-1}^{n} \overline{x}_i}_{=0}
=\prod_{i=0}^{n+1} \overline{x}_i,
\end{multline}
This confirms the validity of eq. (\ref{prodlemmaii}) for $n+1$, completing the induction step. \end{proof}

Lemma \ref{lemma1} deals with products of more than two consecutive $y$ variables, but
we will also need to simplify products of two variables. The next result
provides for this.
\begin{lemma}\label{lemma2}
If $x\in\{0,1\}^{\mathbb{Z}}$ and $y=F_{156}(x)$ then
\begin{align*}
y_k=& x_{{k}}x_{{k-1}}x_{{k+1}}+x_{{k-1}}\overline{x}_{{k}}\overline{x}_{{k+1}}+x_{{k}}\overline{x}_{{k-1}},   & (i)\\
\overline{y}_k=&  x_{{k}}x_{{k-1}}\overline{x}_{{k+1}}+x_{{k-1}}\overline{x}_{{k}}x_{{k+1}}+\overline{x}_{{k}}\overline{x}_{{k-1}}, & (ii)\\
y_{k} y_{k+1}=&x_k x_{k+1} x_{k+2} + \overline{x}_{k-1} x_k \overline{x}_{k+1} \overline{x}_{k+2}, & (iii)\\
\overline{y}_{k} y_{k+1}=&x_k \overline{x}_{k+1} 
   + x_{k-1} x_k \overline{x}_{k+1} \overline{x}_{k+2}, & (iv)\\
\overline{y}_{k} \overline{y}_{k+1}=&   \overline{x}_{k-1} \overline{x}_k \overline{x}_{k+1}
+ x_{k-1}x_k \overline{x}_{k+1} x_{k+2}. & (v)
\end{align*}
\end{lemma}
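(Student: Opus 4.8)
The plan is to dispatch (i) and (ii) as immediate rewritings of definitions, and then to obtain (iii)--(v) by substituting the density‑polynomial forms into the products and reducing with the Boolean identities $x_i^2=x_i$, $x_i\overline{x}_i=0$ and $x_i+\overline{x}_i=1$, exactly as was done for the $n=2$ base cases in the proof of Lemma~\ref{lemma1}.

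Item (i) is nothing but the density‑polynomial representation of $f_{156}$ from eq.~(\ref{local156}) evaluated at $(x_{k-1},x_k,x_{k+1})$, so it needs no argument beyond renaming the variables. Item (ii) follows the same way from eq.~(\ref{oneminusy}): putting $(x_1,x_2,x_3)=(x_{k-1},x_k,x_{k+1})$ gives $\overline{y}_k=\overline{x}_{k-1}\overline{x}_k+x_{k-1}\overline{x}_k x_{k+1}+x_{k-1}x_k\overline{x}_{k+1}$, which is the stated expression after reordering the factors inside each monomial.

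For (iii) I would write $y_k$ via (i) with arguments $(x_{k-1},x_k,x_{k+1})$ and $y_{k+1}$ via (i) with arguments $(x_k,x_{k+1},x_{k+2})$, multiply the two three‑term sums to get nine monomials, then delete every monomial containing a factor $x_i\overline{x}_i$ and collapse repeated factors using $x_i^2=x_i$; the surviving monomials are finally merged with $x_{k-1}+\overline{x}_{k-1}=1$ to yield the two‑term right‑hand side. Items (iv) and (v) are treated identically, using (ii) to replace $y_k$ by $\overline{y}_k$ and $y_{k+1}$ by $\overline{y}_{k+1}$ where appropriate; only the list of surviving monomials changes.

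Alternatively, since each identity in (iii)--(v) involves only the four Boolean variables $x_{k-1},x_k,x_{k+1},x_{k+2}$, one may simply run through the $16$ possible assignments, use the truth table of $f_{156}$ to decide for which of them the left‑hand side equals $1$, read off the resulting sum of minterms, and check that it simplifies to the claimed formula. Either route is purely clerical; the one mild obstacle is bookkeeping — tracking which of the nine (or sixteen) terms vanish — and there is no conceptual difficulty to overcome.
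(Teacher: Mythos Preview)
Your proposal is correct and coincides with the paper's proof: the paper also reads (i) directly from the definition of $f_{156}$, takes (ii) from eq.~(\ref{oneminusy}), and obtains (iii)--(v) by multiplying the three-term expressions (i) and (ii), expanding, killing monomials that contain $x_i\overline{x}_i$, and regrouping via $x_{k-1}+\overline{x}_{k-1}=1$ (the paper illustrates this only for (iv)). Your alternative $16$-case truth-table check is not in the paper but is of course equally valid.
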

\begin{proof}
Formula (i) is a direct consequence of the definition of $f_{156}$.
Expression (ii) has already been derived in eq. (\ref{oneminusy}).
To prove (iii-v), we write the relevant product
 using (i-ii) and then expand and simplify the
resulting expression using properties of Boolean variables. For example, for (iv)
we have
\begin{multline}
\overline{y}_{k}y_{k+1}=\\
 \left( x_{k+1}x_{k-1}\overline{x}_{k}+x_{k-1}x_{k-1}\overline{x}_{k+1}+\overline{x}_{k-1}
 \overline{x}_{k} \right)  \left( x_{k+1}x_{k}x_{k+2}+x_{k}\overline{x}_{k+1}\overline{x}_
{k+2}+x_{k}\overline{x}_{k+1} \right). 
\end{multline}
When the above product is expanded, one obtains 9 terms, but most of them 
are equal to zero because they contain product of complementary Boolean variables.
Only two terms remain in the end, yielding
\begin{equation}
\overline{y}_{k}y_{k+1}
=x_{{k-1}}x_{{k}}\overline{x}_{{k+1}}\overline{x}_{{k+2}}+x_{{k}}\overline{x}_{{k+1}}.
\end{equation}
Proofs of (iii) and (v) can be procured in a similar fashion.
\end{proof}
The next lemma describes relationship between quantities defined in 
eqs. (\ref{correctionsbegin}--\ref{correctionsend}) in $x$ and $y$ variables.
\begin{lemma}\label{lemma3}
If $x\in \{0,1\}^\mathbb{Z}$ and $y=F_{156}(x)$, then 
the following identities are satisfied:
\begin{align*}
[F^n_{140}(y)]_j&= [F^{n+1}_{140}(x)]_j+ x_{j-2}x_{j-1}\overline{x}_{j}\overline{x}_{j+1}, & (i)\\
D(y,n,j)&=D(x,n+1,j), & (ii)\\
B_1(y,n,j)+B_2(y,n,j)&=B_1(x,n+1,j)+B_1(x,n+1,j), & (iii)\\
S_1(y,n,j)+S_2(y,n,j)&=S_1(x,n+1,j)+S_2(x,n+1,j) - x_{j-2}x_{j-1}\overline{x}_{j}
\overline{x}_{j+1}. & (iv)
\end{align*}
\end{lemma}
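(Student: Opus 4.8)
The plan is to establish (i)--(iv) by the same mechanical recipe: substitute $y_i=f_{156}(x_{i-1},x_i,x_{i+1})$ into the left-hand side, rewrite every product of $y$-variables that appears as a product of $x$-variables -- using Lemma~\ref{lemma1} to collapse runs of three or more consecutive $y$'s or $\overline y$'s and Lemma~\ref{lemma2} to deal with isolated factors and with two-variable products -- and then simplify using $x^2=x$, $x\overline x=0$ and $x+\overline x=1$, which annihilate almost all of the monomials generated. What survives is reassembled, after re-indexing the summations, into the stated right-hand side (with the right-hand side of (iii) read as $B_1(x,n+1,j)+B_2(x,n+1,j)$).

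Parts (i) and (ii) are short. For (i), eq.~(\ref{r140sol}) gives $[F^n_{140}(y)]_j=\overline y_{j-1}y_j+\prod_{m=j-1}^{j+n}y_m$; the run of $n+2\ge4$ consecutive $y$'s collapses by Lemma~\ref{lemma1}(i) to $\prod_{m=j-1}^{j+n+1}x_m$, while Lemma~\ref{lemma2}(iv) gives $\overline y_{j-1}y_j=\overline x_{j-1}x_j+x_{j-2}x_{j-1}\overline x_j\overline x_{j+1}$; adding the two and reading $\overline x_{j-1}x_j+\prod_{m=j-1}^{j+n+1}x_m$ as $[F^{n+1}_{140}(x)]_j$ via eq.~(\ref{r140sol}) at $n+1$ gives (i). For (ii), $D(y,n,j)=y_{j-n}\prod_{m=j-n+1}^{j+1}\overline y_m$; the run of $n+1\ge3$ complemented variables collapses by Lemma~\ref{lemma1}(ii) to $\prod_{m=j-n}^{j+1}\overline x_m$, which now carries the factors $\overline x_{j-n}\overline x_{j-n+1}$, and multiplying this by $y_{j-n}$ expanded through Lemma~\ref{lemma2}(i), those two factors kill all but the middle monomial of $y_{j-n}$, leaving $x_{j-n-1}\prod_{m=j-n}^{j+1}\overline x_m=D(x,n+1,j)$.

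Parts (iii) and (iv) carry the weight. After substitution, each summand of $B_1(y,n,j)+B_2(y,n,j)$ and of $S_1(y,n,j)+S_2(y,n,j)$ is a product of $y$'s and $\overline y$'s at consecutive positions: one or two maximal blocks of like factors, separated or flanked by short mixed junctions of type $y_ky_{k+1}$, $\overline y_ky_{k+1}$ or $\overline y_k\overline y_{k+1}$. I would collapse each maximal block of $\ge3$ like factors by Lemma~\ref{lemma1} (which lengthens it by one cell) and a block of exactly two factors by Lemma~\ref{lemma2}(iii) or (v), then resolve the remaining junctions and any isolated factor left at a window end by the relevant cases of Lemma~\ref{lemma2}(i)--(v); after discarding the vanishing monomials, each summand becomes a short sum of $x$-monomials, each of exactly the shape occurring in $B_1(x,\cdot,j),B_2(x,\cdot,j)$ (resp. $S_1,S_2$) but with its window displaced by one cell. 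Finally I re-index: the shift $r\mapsto r+1$ (resp. $m\mapsto m+1$, $k\mapsto k+1$), together with the periodicity $\mathcal P(\ell+2)=\mathcal P(\ell)$, carries these displaced sums onto the four sums defining $B_1(x,n+1,j),B_2(x,n+1,j)$ (resp. $S_1(x,n+1,j),S_2(x,n+1,j)$), with the ``even'' and ``odd'' sub-sums interchanged. This matches everything except the contributions from the two ends of the summation ranges; in (iii) those end-contributions cancel, whereas in (iv) one of them does not and equals exactly $x_{j-2}x_{j-1}\overline x_j\overline x_{j+1}$ -- a monomial present in $S_1(x,n+1,j)$ on the right but produced by no term on the left -- which is the extra term appearing there.

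The main obstacle is purely the bookkeeping in (iii) and (iv): one should branch on the parity of $n$ (after which each $\mathcal P$ is $0$ or $1$ and the identities become ordinary, if bulky, Boolean algebra), keep the summation endpoints $r=2,n$, $m=0,n-1$, $k=1,2,n$ under control as the indices shift, and verify that the monomials coming from $B_1(y,\cdot)$ and from $B_2(y,\cdot)$ -- structurally unlike before simplification -- genuinely merge into the same sums. No individual step is deep; the real risk is an off-by-one slip in a window or a summation bound, which is presumably why this verification was omitted from the original reference. It is worth running $n=2$ first, where $B_1(y,2,j)=0$, $B_2(y,2,j)=\overline y_{j-2}y_{j-1}y_j\overline y_{j+1}y_{j+2}$ and $S_1(y,2,j)+S_2(y,2,j)=y_{j-2}y_{j-1}\overline y_j\overline y_{j+1}+y_{j-2}y_{j-1}y_j\overline y_{j+1}y_{j+2}$: this already exhibits the whole mechanism and fixes the parity conventions.
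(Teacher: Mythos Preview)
Your proposal is correct and follows essentially the same route as the paper: for (i)--(ii) the paper applies Lemma~\ref{lemma1} and Lemma~\ref{lemma2} exactly as you describe, and for (iii)--(iv) it likewise branches on the parity of $n$, collapses the $y$-runs, and re-indexes. The only cosmetic difference is that the paper records the four individual identities $B_1(y,n,j)=B_1(x,n+1,j)-\overline x_{j-2}x_{j-1}\overline x_j\overline x_{j+1}x_{j+2}$, $B_2(y,n,j)=B_2(x,n+1,j)+\overline x_{j-2}x_{j-1}\overline x_j\overline x_{j+1}x_{j+2}$, and similarly $S_1(y,n,j)=S_1(x,n+1,j)+x_{j-2}x_{j-1}x_j\overline x_{j+1}x_{j+2}-x_{j-2}x_{j-1}\overline x_j\overline x_{j+1}$, $S_2(y,n,j)=S_2(x,n+1,j)-x_{j-2}x_{j-1}x_j\overline x_{j+1}x_{j+2}$, making the cancellations explicit rather than arguing them as end-effects of the re-indexing.
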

\begin{proof}
(i) Solution of the initial value problem for rule 140 given by eq. (\ref{r140sol}) yields
\begin{equation*}
[F^n_{140}(y)]_j=
\overline{y}_{j-1}y_{j}+\prod_{i=n-1}^{2n} y_{i-n+j}.
\end{equation*}
Using Lemma \ref{lemma2}(iv), the first term of the above becomes
\begin{equation*}
\overline{y}_{j-1}y_{j}=x_{{j-2}}x_{{j-1}}\overline{x}_{{j}}\overline{x}_{{j+1}}+x_{{j}}\overline{x}_{{j-1}}.
\end{equation*}
The second term, by the virtue of Lemma \ref{lemma1}(i),
is given by
\begin{equation*}
\prod_{i=n-1}^{2n} y_{i-n+j}=
\prod_{i=n-1}^{2n+1} x_{i-n+j}=\prod_{p=n}^{2n+2} x_{p-n-1+j}.
\end{equation*}
The last equality reflect the change of the dummy index from $i$ to $p$,  $i=p-1$. 
The final result is then
\begin{multline*}
[F^n_{140}(y)]_j=x_{{j-2}}x_{{j-1}}\overline{x}_{{j}}\overline{x}_{{j+1}}+x_{{j}}\overline{x}_{{j-1}}
+\prod_{p=n}^{2n+2} x_{p-n-1+j}\\
=x_{j-2}x_{j-1}\overline{x}_{j} \overline{x}_{j+1}+[F^{n+1}_{140}(x)]_j, 
\end{multline*}
as required.

(ii) The next identity we will prove involves $D$. Using Lemma \ref{lemma1}(i) as well as Lemma \ref{lemma2}(ii), we have
\begin{multline}
D(y,n,j)=y_{j-n}\prod _{m=j-n+1}^{j+1}\overline{y}_{m}\\
=\left(
x_{{j-n}}x_{{j-n-1}}x_{{j-n+1}}+x_{{j-n-1}}\overline{x}_{{j-n}}\overline{x}_{{j-n+1}}+x_{{j-
n}}\overline{x}_{{j-n-1}}
\right)
\prod_{j-n}^{j+1}\overline{x}_m\\
=x_{j-n-1}  \prod_{m=j-n}^{j+1}\overline{x}_m
=x_{j-(n+1)}  \prod_{m=j-(n+1)+1}^{j+1}\overline{x}_m
=D(x,n+1,j).
\end{multline}

(iii) Let us deal with $B_1$ first. Let us assume that $n$ is even, then
\begin{equation*}
 B_1(y,n,j)= \sum _{r=2}^{n} \left( \mathcal{P
} \left( r+1 \right) y_{{j-r}}y_{{j-r+1}} y_{{j+2}} \prod _{m=j-r+2}^{j+1}\overline{y}_{{m
}} \right). 
\end{equation*}
The expression inside the sum can be transformed as follows,
\begin{multline*}
y_{{j-r}}y_{{j-r+1}} y_{{j+2}} \prod _{m=j-r+2}^{j+1}\overline{y}_{{m
}} \\
=\left(
x_{j-r} x_{j-r+1} x_{j-r+2} + \overline{x}_{j-r-1} x_{j-r} \overline{x}_{j-r+1} \overline{x}_{j-r+2}\right)y_{{j+2}} \prod _{m=j-r+1}^{j+1}\overline{x}_{m}
 \\
= \overline{x}_{j-r-1} x_{j-r}  
y_{{j+2}} \prod _{m=j-r+1}^{j+1}\overline{x}_{m}
\\=
 \overline{x}_{j-r-1} x_{j-r}  
\left(
x_{j+2}x_{j+1}x_{j+3}+x_{j+1}\overline{x}_{j+2}\overline{x}_{j+3}+x_{j+2}\overline{x}_{j+1}
\right)
 \prod _{m=j-r+1}^{j+1}\overline{x}_{m}\\
 =
 \overline{x}_{j-r-1} x_{j-r}  
x_{j+2}
 \prod _{m=j-r+1}^{j+1}\overline{x}_{m},
\end{multline*}
If $n$ is even, we thus obtain
\begin{equation}B_1(y,n,j)=\sum_{r=2}^n \left( \mathcal{P}(r+1)   \overline{x}_{j-r-1} x_{j-r}  
x_{j+2}
 \prod _{m=j-r+1}^{j+1}\overline{x}_{m} \right).
 \end{equation}
For even $n$, $n+1$ is odd, and from the definition of $B_1$,
 \begin{equation}
 B_1(x,n+1,j)=
\sum _{r=2}^{n+1}
 \left( \mathcal{P} \left( r \right) 
\overline{x}_{{j-r}} 
   x_{{j-r
+1}} x_{{j+2}} \prod _{m=j-r+2}^{j+1}\overline{x}_{{m}} \right). 
\end{equation}
Changing the index in the last sum to $r=p+1$ we get
 \begin{multline}
 B_1(x,n+1,j)=
\sum _{p=1}^{n}
 \left( \mathcal{P} \left( p+1 \right) 
\overline{x}_{{j-p-1}} 
   x_{{j-p}} x_{{j+2}} \prod _{m=j-p+1}^{j+1}\overline{x}_{{m}} \right)\\
   = \overline{x}_{{j-2}} 
   x_{{j-1}} x_{{j+2}} \prod _{m=j}^{j+1}\overline{x}_{{m}} 
   +B_1(y,n,j),
\end{multline}
thus
\begin{equation*}
B_1(y,n,j)=
B_1(x,n+1,j)- \overline{x}_{{j-2}} 
   x_{{j-1}} x_{{j+2}} \prod _{m=j}^{j+1}\overline{x}_{{m}}.
\end{equation*}
Using the same method, one can show that a similar identity holds for $B_2$, namely
\begin{equation*}
B_2(y,n,j)=
B_2(x,n+1,j)+ \overline{x}_{{j-2}} 
   x_{{j-1}} x_{{j+2}} \prod _{m=j}^{j+1}\overline{x}_{{m}}.
\end{equation*}
Formula (iii)  $n$ then follows automatically. For odd $n$ the reasoning is 
very similar, thus we will not be repeating it here.

(iv) For the last identity we will not supply all details, as the 
calculations are analogous to those performed in the proof of (iii).
By expressing $S_1(y,n,j)$ and $S_1(y,n,j)$ in terms of $x$ one obtains
$$S_1(y,n,j)=S_1(x,n+1,j)
+x_{j-2} x_{j-1} x_j \overline{x}_{j+1}  x_{j+2} -
x_{j-2} x_{j-1} \overline{x}_j \overline{x}_{j+1},  
$$
$$S_2(y,n,j)=S_2(x,n+1,j)-x_{j-2} x_{j-1} x_{j} \overline{x}_{j+1} x_{j+2}.$$
The above two equations, when added side by side, yield the identity (iv).
\end{proof}

We are now ready to prove Theorem \ref{maintheorem156}.
As remarked at the end of the previous section, all we need to do is to verify
eq. (\ref{proofformula}),
\begin{equation} \label{proofformulabis}
[F_{156}^{n+1}(x)]_j=[F_{156}^{n}(y)]_j.
\end{equation}
It should be obvious by now that Lemma \ref{lemma3} supplies all necessary 
identities. The right hand side of eq. (\ref{proofformulabis}) is
\begin{multline}\label{termsofsol}
[F^n_{156}(y)]_j=
[F^n_{140}(y)]_j
 \\+ D(y,n,j)
+ B_1(y,n,j)
+ B_2(y,n,j)
+ S_1(y,n,j)
+ S_2(y,n,j).
\end{multline}
By the virtue of Lemma \ref{lemma3} and because of the cancellation of the
term $x_{j-2}x_{j-1}\overline{x}_{j}\overline{x}_{j+1}$
 this becomes
\begin{multline}
[F^n_{156}(y)]_j=
[F^{n+1}_{140}(x)]_j
 + D(x,n+1,j)
+ B_1(x,n+1,j)\\
+ B_2(x,n+1,j)
+ S_1(x,n+1,j)
+ S_2(x,n+1,j),
\end{multline}
which is precisely the left hand side of  eq. (\ref{proofformulabis}),
concluding the proof of Theorem \ref{maintheorem156}.

\section{Probabilistic solution}
One of the useful applications of the explicit solution of the IVP is in determining the
``density'' of cells in state 1 after $n$ iterations of the rule. Suppose that
all values of $x_i$ are initially set as independent and identically distributed random variables, such that $Pr(x_i=1)=p$ and $Pr(x_i=0)=1-p$, where $p\in[0,1]$ is a fixed
parameter. The expected value of $x_i$ then corresponds to the ``density of ones''
in the initial configuration, and we have
$\langle x_i \rangle=1\cdot p + 0 \cdot (1-p)=p$. The initial density is thus $p$.
The density after $n$ iterations will be equal to $\langle [F^n(x)]_i\rangle$ and we need to compute it. Eq. (\ref{rule156corrections2}) yields
\begin{multline}
\langle [F^n_{156}(x)]_i \rangle= \langle[F^n_{140}(x)]_i \rangle + \langle D(x,n,i)\rangle
+ \langle B_1(x,n,i)\rangle \\
+ \langle B_2(x,n,i)\rangle
+ \langle S_1(x,n,i)\rangle
+ \langle S_2(x,n,i)\rangle.
\end{multline}
Computing the expected values of individual  terms is based on the property that the
expected value of the product of independent random variables is equal to the
product of their expected values. For $[F_{140}^n(x)]_j$ we have
\begin{multline} \label{R140exp}
\langle [F_{140}^n(x)]_j \rangle =\langle \overline{x}_{j-1}x_{j} \rangle+ \left \langle \prod_{i=n-1}^{2n} x_{i-n+j} \right \rangle \\
\langle \overline{x}_{j-1}\rangle \langle x_{j} \rangle+  \prod_{i=n-1}^{2n} \langle  x_{i-n+j}  \rangle = (1-p)p + p^{n+1}.
\end{multline}
Similarly, for $D(x,n,j)$ we obtain
\begin{equation}
\langle D(x,n,j) \rangle =\langle x_{j-n} \rangle \prod _{m=j-n+1}^{j+1}\langle \overline{x}_{m} \rangle = p(1-p)^{n+1}.
\end{equation}
For $B_1(x,n,j)$ it is slightly more complicated,
\begin{multline*}
\langle B_1(x,n,j) \rangle=\mathcal{P} \left( n \right) \sum _{r=2}^{n} \left( \mathcal{P
} \left( r+1 \right) \langle x_{j-r} \rangle \langle x_{{j-r+1}} \rangle \langle x_{{j+2}}\rangle \prod _{m=j-r+2}^{j+1}\langle \overline{x}_{{m
}} \rangle \right) \\+\mathcal{P} 
\left( n+1 \right) 
\sum _{r=2}^{n}
 \left( \mathcal{P} \left( r \right) 
\langle \overline{x}_{{j-r}} \rangle 
\langle    x_{{j-r
+1}}\rangle  \langle x_{{j+2}}\rangle \prod _{m=j-r+2}^{j+1}\langle \overline{x}_{{m}}\rangle  \right)\\
= \mathcal{P} \left( n \right) \sum _{r=2}^{n}  \mathcal{P
} \left( r+1 \right) p^3 (1-p)^{r-1}  +\mathcal{P} 
\left( n+1 \right) 
\sum _{r=2}^{n}
  \mathcal{P} \left( r \right) 
(1-p)p^2 (1-p)^{r-1} 
\end{multline*}
Using $\mathcal{P}(r)=\frac{1}{2}(-1)^r+\frac{1}{2}$, the sums in the last line become
partial sums of geometric series which can be computed easily. The result, after simplification, becomes
\begin{multline}
\langle B_1(x,n,j) \rangle = \mathcal{P}\left( n+1 \right)  \left( {\frac { \left( 1-p \right) ^{3}{p
}^{n+1}p}{{p}^{2}-1}}-{\frac { \left( 1-p \right) ^{3}{p}^{3}}{{p}^{2}
-1}} \right) \\
+\mathcal{P}\left( n \right)  \left( {\frac { \left( 1-p
 \right) ^{2}{p}^{n+1}{p}^{2}}{{p}^{2}-1}}-{\frac { \left( 1-p
 \right) ^{2}{p}^{3}}{{p}^{2}-1}} \right).
 \end{multline}
 The terms involving $B_2$, $S_1$ and $S_2$ can be processed in the same fashion, yielding
 \begin{multline}
\langle B_2(x,n,j) \rangle = \mathcal{P} \left( n \right)  \left( {\frac {{p}^{2} \left( 1-p \right) ^
{n+1}}{p-2}}+{\frac {{p}^{2} \left( -1+p \right) ^{3}}{p-2}} \right) +\\
\mathcal{P} \left( n+1 \right)  \left( {\frac {p \left( 1-p \right) ^{n+2
}}{p-2}}+{\frac {p \left( -1+p \right) ^{3}}{p-2}} \right), 
\end{multline}
\begin{multline}
\langle S_1(x,n,j) \rangle =\mathcal{P}\left( n \right)  \left( {\frac {p \left( 1-p \right) ^{n+2}}
{p-2}}-{\frac {p \left( -1+p \right) ^{2}}{p-2}}+{\frac { \left( 1-p
 \right) ^{n+2}}{p-2}}-{\frac { \left( -1+p \right) ^{4}}{p-2}}
 \right) \\
 +\mathcal{P}\left( n+1 \right)  \left( {\frac {p \left( 1-p
 \right) ^{n+1}}{p-2}}-{\frac {p \left( -1+p \right) ^{2}}{p-2}}+{
\frac { \left( 1-p \right) ^{n+3}}{p-2}}-{\frac { \left( -1+p \right) 
^{4}}{p-2}} \right),
\end{multline}
and
\begin{multline}\label{S2expr}
\langle S_2(x,n,j)\rangle = \mathcal{P}\left( n \right)  \left( -{\frac {{p}^{n+2} \left( 2\,p-1
 \right) }{{p}^{2}-1}}+{\frac {{p}^{n+4}}{{p}^{2}-1}}-{\frac { \left( 
-p \right) ^{n+4}}{p+1}}-{\frac {{p}^{4} \left( p-2 \right) }{p+1}}
 \right) \\
 + \mathcal{P}\left( n+1 \right)  \left( {\frac {{p}^{n+3}}{{p}^{
2}-1}}-{\frac {{p}^{n+3} \left( 2\,p-1 \right) }{{p}^{2}-1}}-{\frac {
 \left( -p \right) ^{n+4}}{p+1}}-{\frac {{p}^{4} \left( p-2 \right) }{
p+1}} \right) 
. 
\end{multline}\label{probsol1}
After combining all expressions given by eqs. (\ref{R140exp}--\ref{S2expr}) and simplifying the result, the final formula for 
the expected value of a cell after $n$ iterations becomes
\begin{multline}
\langle [F^n_{156}(x)]_i \rangle= \frac{1}{2}+{\frac {{p}^{n+3}}{1+p}}+{\frac { \left( 1-p \right) ^{n+3}}{p-2}}
+\frac{1}{2}\,{\frac { p\left( p-1 \right)  \left( 2\,p-1 \right)  \left( -1
 \right) ^{n}}{ \left( 1+p \right)  \left( p-2 \right) }}.
\end{multline}
Although it is not obvious at the first sight, this formula exhibits certain
level of symmetry. If we introduce $q=1-p$, representing probability of 0's in the
initial configuration, then it takes the form
$$
\langle [F^n_{156}(x)]_i \rangle=
\frac{1}{2} + \frac{p^{n+3}}{1+p} -\frac{q^{n+3}}{1+q}
+ \frac{pq(p-q)}{2(1+p)(1+q)} (-1)^n.
$$
One can see that interchange of $p$ and $q$ changes the value of this expression to
$1-\langle [F^n_{156}(x)]_i \rangle$.
This echoes the fact that Boolean conjugation (interchange of roles of 0's and 1's in the definition of the local function) combined with spatial reflection does not change the rule
156.

The presence of the term with $(-1)^n$ in the expression for $\langle [F^n_{156}(x)]_i \rangle$ causes oscillations of the density of ones, and a quick look at
Figure \ref{rule156patt} makes it clear that they appear due to ``blinkers'' developing in the spatiotemporal pattern. These blinkers develop more of less easily depending on the
value of the initial density $p$. The amplitude of oscillations, which we can define as the absolute value of the coefficient in front of $(-1)^n$,  is given by
$$
A(p)=\frac{1}{2}\, \left| {\frac { p\left( p-1 \right)  \left( 2\,p-1 \right)}{ \left( 1+p \right)  \left( p-2 \right) }} \right|.
$$
For what value of $p$ are these oscillations strongest? Graph of $A(p)$ vs. $p$ 
shown in Figure~\ref{Apgraph} reveals that this happens at two values which can be obtained by solving $dA/dp=0$ for $p$,  that is,
$$
\frac{dA}{dp}=
{\frac {{p}^{4}-2\,{p}^{3}-5\,{p}^{2}+6\,p-1}{ \left( 1+p \right) ^{2}
 \left( p-2 \right) ^{2}}} =0.
$$
\begin{figure}[b]
\begin{center}
\includegraphics[scale=0.9]{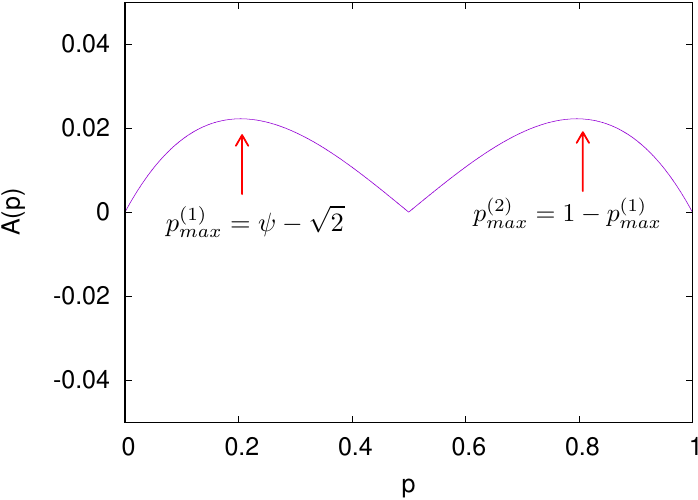}
\end{center}
\caption{Graph of the amplitude of density's oscillations  as a function of the initial density $p$.}\label{Apgraph}
\end{figure}
Two solutions of the above are in the interval $[0,1]$, namely
\begin{align*}
p_{max}^{(1)}&=\psi - \sqrt{2}\approx 0.2038204260, \\
p_{max}^{(2)}&=1-p_{max}^{(1)} \approx .796179574,
\end{align*}
where 
$$\psi=\frac{1+\sqrt{5}}{2}\approx 1.618033988$$
is the golden ratio.
\begin{figure}
\begin{center}
\includegraphics[scale=0.6]{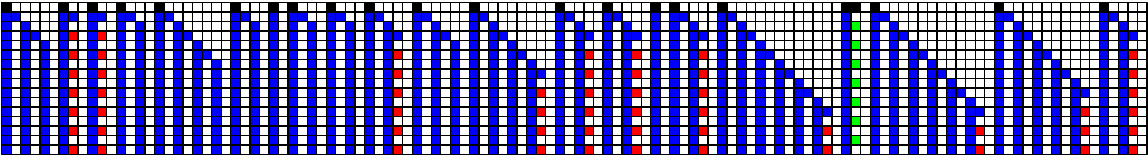}\\
$p=p_{max}^{(1)}$ \\[1em]
\includegraphics[scale=0.6]{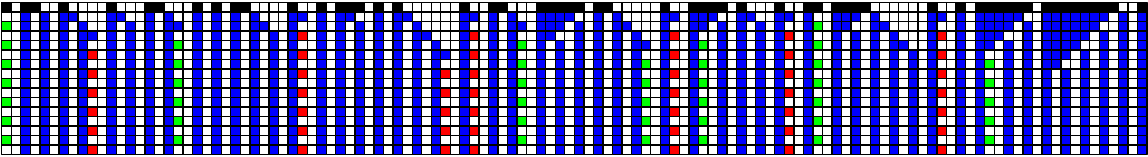}\\
$p=1/2$ \\[1em]
\includegraphics[scale=0.6]{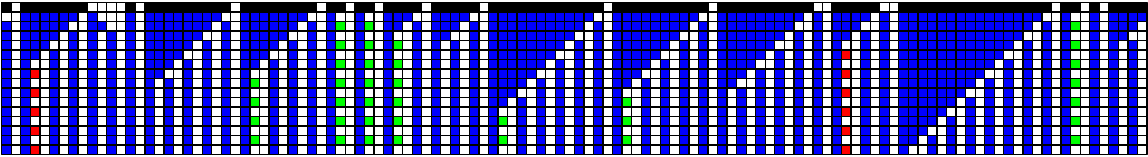}\\
$p=p_{max}^{(2)}$
\end{center}
\caption{Spatiotemporal patterns of rule 156 generated starting from random initial configurations with three different values of the initial density $p$. Sites in state 1
are black in the initial string, red in odd blinkers and green in even blinkers. Other sites in state 1 are blue. Lattice of 120 sites with periodic boundaries.
}\label{blinkersfig}
\end{figure}
When $p=1/2$, the amplitude is minimal and equal to zero (similarly as in the trivial cases of
$p=0$ and $p=1$).
Figure \ref{blinkersfig} shows examples of spatiotemporal patterns generated
by rule 156 for $p=p_{max}^{(1)}$, $p=1/2$ and  $p=p_{max}^{(2)}$.
Blinkers which are in state 1 at odd times are called odd and shown in red color,
while those which are in state 1 at even times are called even and are shown in green
color. One can see that at $p_{max}^{(1)}$ odd blinkers prevail, even though even ones
occasionally appear as well. Similarly, at $p_{max}^{(2)}$ even blinkers dominate.
When $p=1/2$, both types of blinkers are present, but they occur with equal frequency,
thus they cancel each other's oscillations and on average the amplitude of oscillations is zero.

The amplitude of oscillations at  $p_{max}^{(1,2)}$ is rather small,
$$A\left(p_{max}^{(1,2)} \right)=
\frac{4}{3}\sqrt{2} -  \frac{5}{6} \sqrt{5} \approx 0.022228101,
$$
meaning that it would be rather difficult to discover this phenomenon 
by computer experiments alone, not to mention that the value
of $p_{max}^{(1)}=\psi-\sqrt{2}$ is far from obvious.
 This highlights the fact that obtaining 
exact solutions of the initial value problem for various CA rules can 
potentially lead to discovering their novel properties,
thus it is a worthwhile direction of research.

Using the method outlined in this section one can
obtain probability of occurrence of any block $\mathbf{a}$ after $n$ iterations
of rule 156 starting from random initial condition described earlier,
where all values of $x_i$ are initially set as independent and identically distributed random variables, such that $Pr(x_i=1)=p$ and $Pr(x_i=0)=1-p$, where $p\in[0,1]$ is a fixed
parameter. Let us denote such probability by $P_n(\mathbf{a})$.
We already computed $P_n(1)=\langle [F^n_{156}(x)]_i \rangle$, hence
$P_n(0)=1-\langle [F^n_{156}(x)]_i \rangle$, yielding
\begin{multline}
P_n(0)= \frac{1}{2}-{\frac {{p}^{n+3}}{1+p}}-{\frac { \left( 1-p \right) ^{n+3}}{p-2}}
-\frac{1}{2}\,{\frac { p\left( p-1 \right)  \left( 2\,p-1 \right)  \left( -1
 \right) ^{n}}{ \left( 1+p \right)  \left( p-2 \right) }}.
\end{multline}
If one wants to compute, for example, $P_n(00)$, one just needs to note that block
$00$ will appear at position $i$ and $i+1$ when $[F^n_{156}(x)]_i=0$
and $[F^n_{156}(x)]_{i+1}=0$, that is, when 
$$\left( 1- [F^n_{156}(x)]_i \right) \left(1-[F^n_{156}(x)]_{i+1}) \right) = 1.$$
Due to translation invariance we then have
$$
P_n(00)=\Big \langle \left( 1- [F^n_{156}(x)]_0 \right) \left(1-[F^n_{156}(x)]_{1}) \right)   \Big \rangle.
$$
This expected value can be now computed using the expression for
$[F^n_{156}(x)]_i$ given in 
eq. (\ref{rule156solution}) in a similar way as we did for $\langle [F^n_{156}(x)]_i \rangle$. The calculation, however, are rather long and  tedious and are best done using a
computer algebra system, thus we will not reproduce them here. We just 
give the final result,
\begin{align*}
P_n(00)&=\frac{1}{2}\,{\frac {p \left( p-1 \right)  \left( 2\,{p}^{4}-4\,{p}^{3}+2\,{p}
^{2}+1 \right) }{ \left( 1+p \right)  \left( p-2 \right) }}-\frac{1}{2}\,{
\frac { \left( p-1 \right) {p}^{n+2}}{1+p}}
\\
&+\frac{1}{2}\,{\frac { \left( 
3\,p-4 \right)  \left( 1-p \right) ^{n+2}}{p-2}}
+\frac{1}{2}\,{\frac { \left( 
2\,p-1 \right)  \left( p-1 \right)  \left( -p \right) ^{n+2}}{1+p}}
\\
&-\frac{1}{2}\,{\frac {p \left( 2\,p-1 \right)  \left( p-1 \right) ^{n+2}}{p-2}}
-\frac{1}{2}\,{\frac { p \left( p-1 \right)  \left( 2\,p-1 \right)  \left( -1
 \right) ^{n}}{ \left( 1+p \right)  \left( p-2 \right) }}.
\end{align*}
We can see that oscillations similar to what we saw in $P_n(1)$ are present
in $P_n(00)$, these are identified by $(-1)^n$ factor in the last term. In addition, there also damped oscillations corresponding to the term with $(-p)^{n+1}$ (fourth term)
and $(p-1)^{n+2}$ (fifth term). These quickly tend to zero for $p \in (0,1)$, 
thus they would also not be easily identified in numerical simulations of rule 156.

Of course probabilities of other blocks can be also obtained in a similar fashion, at least in principle, as long as we are willing to compute the relevant expected values, which
can became dauntingly complex for longer blocks. We show below solutions
for $P_n(000)$ and $P_n(010)$, obtained with the help of Maple symbolic algebra software.
\begin{align*}
P_n(000)&= \left( 1-p \right) ^{n+3}
,\\\nonumber
P_n(010)&=-\frac{1}{2}\,{\frac {4\,{p}^{6}-12\,{p}^{5}+12\,{p}^{4}-4\,{p}^{3}+{p}^{2}-p+
2}{ \left( 1+p \right)  \left( p-2 \right) }}+{\frac { \left( {p}^{2}-
p-1 \right) {p}^{n+2}}{1+p}}
\\
&+{\frac { \left( 1-p \right) ^{n+2}}{p-2}}
-{\frac { \left( 2\,p-1 \right)  \left( p-1 \right)  \left( -p
 \right) ^{n+2}}{1+p}}+{\frac {p \left( 2\,p-1 \right)  \left( p-1
 \right) ^{n+2}}{p-2}}
 \\
 &-\frac{1}{2}\,{\frac {p \left( p-1 \right)  \left( 2\,p-
1 \right)  \left( -1 \right) ^{n}}{ \left( 1+p \right)  \left( p-2
 \right) }}
. \end{align*}
Having $P_n(0)$, $P_n(00)$, $P_n(000)$ and $P_n(010)$, one can compute probabilities of all other blocks of length up to 3, using consistency conditions and formulae derived
in \cite{paper50,hfbook},
\begin{align} \label{shortform3}
\left[ \begin {array}{c} 
P(001)\\
P(011)\\  
P(100) \\
P(101)\\
P(110)\\
P(111)
\end {array} \right] &= 
\left[ \begin {array}{c} 
P(00)-P(000) \\ 
P(0)-P(00) -P(010) \\  
P(00)-P(000) \\  
P(0)-2 P(00)+P(000) \\ 
P(0)-P(00)-P(010)\\  
 1-3 P(0) +2 P(00) +P(010) 
\end {array} \right]. \nonumber \\
\left[ \begin {array}{c} 
P(01)\\
P(10)\\  
P(11)
\end {array} \right] &= 
\left[ \begin {array}{c} 
P(0) -P(00) \\
P(0) -P(00) \\ 
 1-2 P(0)+P(00)
\end {array} \right], \nonumber  	\\
 P(1) &= 1-P(0). 
\end{align}
For blocks $\mathbf{a}$ of higher length similar calculations could be performed, thus
the probability measure resulting from $n$ iterations of the Bernoulli measure could  theoretically be described with arbitrary precision (by giving probabilities of blocks of any length). In practice, however, as already remarked, going beyond blocks of three symbols becomes 
very cumbersome due to the complexity of the resulting expressions, even if one uses a computer algebra system.

\section{Conclusions}
The method for solving  the IVP outlined here is applicable to other rules as well. 
In the case of rule 156, the ``perturbation term'' was non-negative, but
in other cases it may be necessary not only to add some terms, but also subtract.
Such situation occurs in rule 78, for which
$$f_{78}(x_1,x_2,x_3)=f_{206}(x_1,x_2,x_3)- x_1x_2x_3.$$
Solution of the IVP for rule 206 is known,
thus  the IVP for rule 78 can be solved as well \cite{hfbook}.

A natural question to ask is how general can the method be. Although 
at the moment the definitive answer is not known, it is worth noting that rule 156
(as well as the aforementioned rule 78) is almost equicontinuous\footnote{
The word 01 is 1-blocking for rule 156, and 10 is 1-blocking for rule 78,
thus using the result of K{\r{u}}rka  \cite{Kurka1997}, these rules are almost equicontinuous.}
,
and many almost-equicontinuous rules  seem to produce patterns amenable to decomposition into relatively simple elements.
In fact, the IVP is solvable for almost all elementary rules possessing some equicontinuity property (equicontinuous, almost equicontinuous, or with equicontinuous or almost equicontinuous direction), yet in most cases other (simpler) methods can be used to obtain the solution~\cite{hfbook}.

Another question is how to split the local function $f$ into an unperturbed rule $g$ and 
perturbation $b$, as we did in eq. (\ref{local156})? The crucial property 
is not only that $g$ must be solvable but also that the perturbation must be relatively ``mild'', so that adding $b$ does not destroy
the pattern of $g$ but modifies it in such a way that the changes can be described
by a simple algorithm. Perturbation $b$ changing only one bit in the rule table
does not always guarantee this, as sometimes changing one bit completely modifies the nature of the rule. For example,
$$f_{18}(x_1,x_2,x_3)=f_{19}(x_1,x_2,x_3) - \overline{x}_1 \overline{x}_2 \overline{x}_3,
$$
thus rules 18 and 19 differ only on $(x_1,x_2,x_3)=(0,0,0)$. 
\begin{figure}
\begin{center}
\includegraphics[scale=0.3]{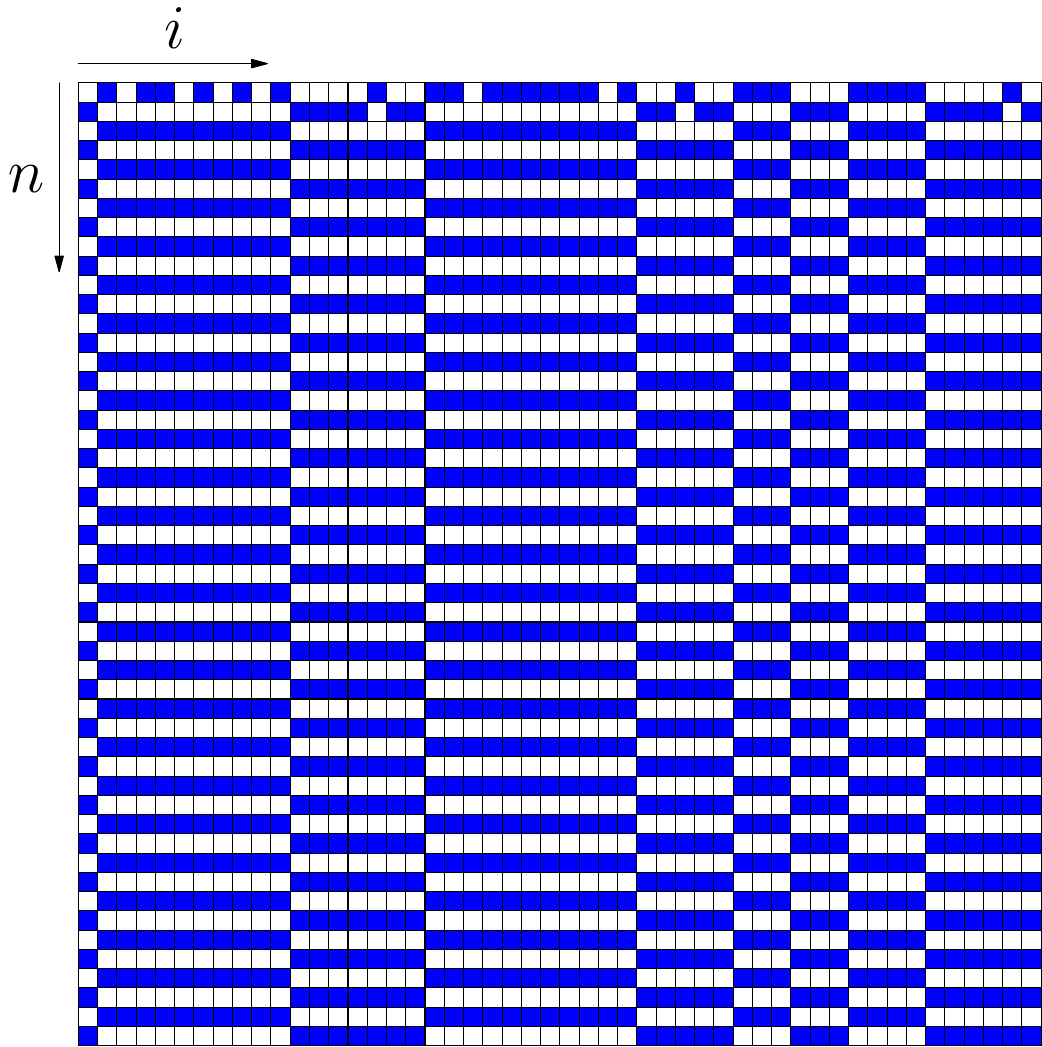}
\includegraphics[scale=0.3]{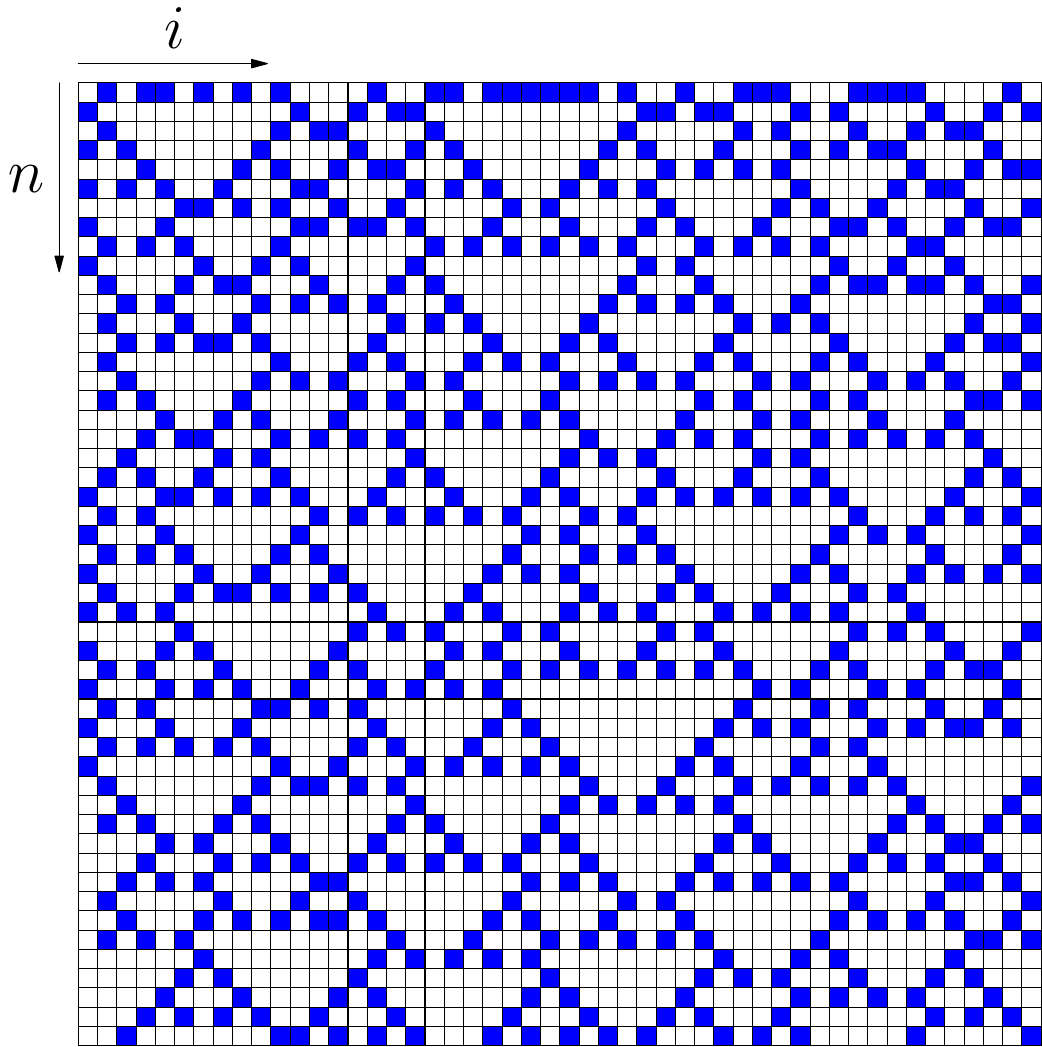}
\end{center}
\caption{Spatiotemporal patterns of rule 19 (left) and rule 18 (right).}
\label{rule19-18}
\end{figure}
In spite of this,  although
rule 19 is easily solvable \cite{hfbook}, rule 18 
is most likely not solvable, as it exhibits complex fractal-like backgound with 
``defects''
which  perform a pseudo-random walk  and annihilate \index{annihilation of defects} upon collision  \cite{GRASSBERGER198452,Eloranta1992}. Comparing patterns generated by these
two rules makes it rather clear at the first sight (cf. Figure \ref{rule19-18}).

\vskip 1em
\noindent\textbf{Acknowledgment} The author acknowledges partial financial support from the Natural Sciences and Engineering Research Council of Canada in the form of Discovery Grant
RGPIN-2015-04623.

\end{document}